\def \prend{\vrule depth-1pt height7pt width6pt}
\newcommand{\seepage}[1]{\marginpar{\scriptsize (p.~\pageref{#1})}}
\newtheorem{theorem}{T\/heorem}[section]
\newtheorem{apptheo}{T\/heorem}[section]
\newtheorem{corollary}{Corollary}[section]
\newtheorem{definition}{Definition}[section]
\newtheorem{lemma}{Lemma}[section]
\newtheorem{applem}{Lemma}[section]
\newtheorem{example}{Example}[section]
\newtheorem{comment}{Comment}[section]
\newtheorem{fact}{Fact}[section]
\newtheorem{claim}{Claim}[section]
\newtheorem{proposition}{Proposition}[section]
\newtheorem{remark}{Remark}[section]
\newtheorem{open}{Open problem}[section]
\newtheorem{theorem}{T\/heorem}
\newtheorem{corollary}[theorem]{Corollary}
\newtheorem{definition}[theorem]{Definition}
\newtheorem{lemma}[theorem]{Lemma}
\newtheorem{example}[theorem]{Example}
\newtheorem{claim}[theorem]{Claim}
\newtheorem{proposition}[theorem]{Proposition}
\newtheorem{open}[theorem]{Open problem}
\newtheorem{open}{Open problem}[section]
\newcommand{\rfig}[1]{Fig.~\ref{#1}}
\newtheorem{theorem}{Theorem}[section] 
\newtheorem{lemma}[theorem]{Lemma}
\newtheorem{proposition}[theorem]{Proposition}
\newtheorem{corollary}[theorem]{Corollary}
\newtheorem{definition}[theorem]{Definition}
\newtheorem{open}[theorem]{Open problem}
\newtheorem{example}[theorem]{Example}
\begin{document}

%\begin{frontmatter}

\title{Outfix-Guided 
	Insertion\thanks{%©2017.
This   manuscript   version   is   made   available   under   the   
CC-BY-NC-ND   4.0   license
{\tt http://creativecommons.org/licenses/by-nc-nd/4.0/}
Accepted in Theoretical Computer Science
{\tt http://dx.doi.org/10.1016/j.tcs.2017.03.040}
	An 
extended abstract of this paper appeared in the {\it Proceedings of the
	20th International Conference Developments
in Language Theory,} DLT 2016, Lect. Notes Comput. Sci. 9840,
Springer-Verlag, 2016, pp. 102--113.}}

\date{February 20, 2017}

%\tnotetext[title]{©2016.
%This   manuscript   version   is   made   available   under   the   
%CC-BY-NC-ND   4.0   license
%http://creativecommons.org/licenses/by-nc-nd/4.0/
%Accepted as http://dx.doi.org/10.1016/j.tcs.2017.03.040}
%	An 
%extended abstract of this paper appeared in the {\it Proceedings of the
%	20th International Conference Developments
%in Language Theory,} DLT 2016, Lect. Notes Comput. Sci. 9840,
%Springer-Verlag, 2016, pp. 102--113.}

\author{Da-Jung Cho\thanks{Department of Computer Science, Yonsei University,
50, Yonsei-Ro, Seodaemum-Gu, Seoul 120--749, Republic of Korea
{\tt \{dajungcho, emmous\}@yonsei.ac.kr}} 
\and Yo-Sub Han$^{\dag}$ \and Timothy 
Ng\thanks{School of Computing, Queen's University,
Kingston, Ontario K7L 3N6, Canada
{\tt \{ng, ksalomaa\}@cs.queensu.ca}} \and Ka Salomaa$^{\ddag}$}

%\ead{dajungcho@yonsei.ac.kr}

%\author[add2]{Yo-Sub Han}
%\ead{emmous@yonsei.ac.kr}
%
%\author[add1]{Timothy Ng}
%\ead{ng@cs.queensu.ca}
%
%\author[add1]{Kai Salomaa\corref{cor}}
%\ead{ksalomaa@cs.queensu.ca}
%
%\cortext[cor]{Corresponding author.}
%\address[add2]{Department of Computer Science, Yonsei University,
%50, Yonsei-Ro, Seodaemum-Gu, Seoul 120--749, Republic of Korea}
%\address[add1]{School of Computing, Queen's University,
%Kingston, Ontario K7L 3N6, Canada}

\maketitle

\begin{abstract}
Motivated by work on bio-operations on DNA strings,
we consider an outfix-guided insertion operation
that can be viewed as a generalization of the overlap assembly
operation on strings studied previously.
As the main result we construct a finite language $L$
such that the outfix-guided insertion closure of $L$ is
non-regular. We consider also the closure properties of regular and
(deterministic) context-free languages under the outfix-guided
insertion operation and  decision problems
related to outfix-guided insertion. Deciding whether a language
recognized by a deterministic finite automaton is closed under
outfix-guided insertion can be done in polynomial time.
The complexity of the corresponding question for nondeterministic
finite automata remains open.

\end{abstract}

{\bf Keywords:}
	language operations, closure properties, regular languages

%\end{frontmatter}

\section{Introduction}

Gene insertion and deletion are basic operations occurring
in DNA recombination in molecular biology.
Recombination  creates a new  DNA strand by cutting, 
substituting, inserting, deleting or combining other strands.
Possible  errors in this process
%directly affect DNA strands and 
impair the function of genes.
Errors in DNA recombination cause mutation that plays a 
part in normal and abnormal biological processes such as cancer, 
the immune system, protein synthesis and evolution~\cite{Bertram00}.
Since mutational damage may or may not 
be easily identifiable, researchers deliberately 
generate mutations so that the structure and biological activity of 
genes can be examined in detail.
\emph{Site-directed mutagenesis} is 
one of the most important techniques in laboratory
 for generating mutations on specific sites of DNA using PCR 
(polymerase chain reaction) based methods~\cite{FlavellSBW75,HemsleyATDCG89}.
For a site-directed insertion mutagenesis by PCR, the mutagenic primers are typically designed to include the desired change, which could be base addition~\cite{LeeSRKR10,LiuN08}.
This enzymatic reaction occurs in 
the test tube with a DNA strand and predesigned primers 
in which the DNA strand includes a target region, 
and a predesigned primer includes a complementary region of the target region.  
The complementary region of primers leads it to hybridize the target DNA region 
and generate a desired insertion on a specific site as a mutation.
\rfig{fig:mutagenesis} illustrates the
procedure of site-directed insertion mutagenesis by PCR.
\begin{figure}
\includegraphics[width=12cm]{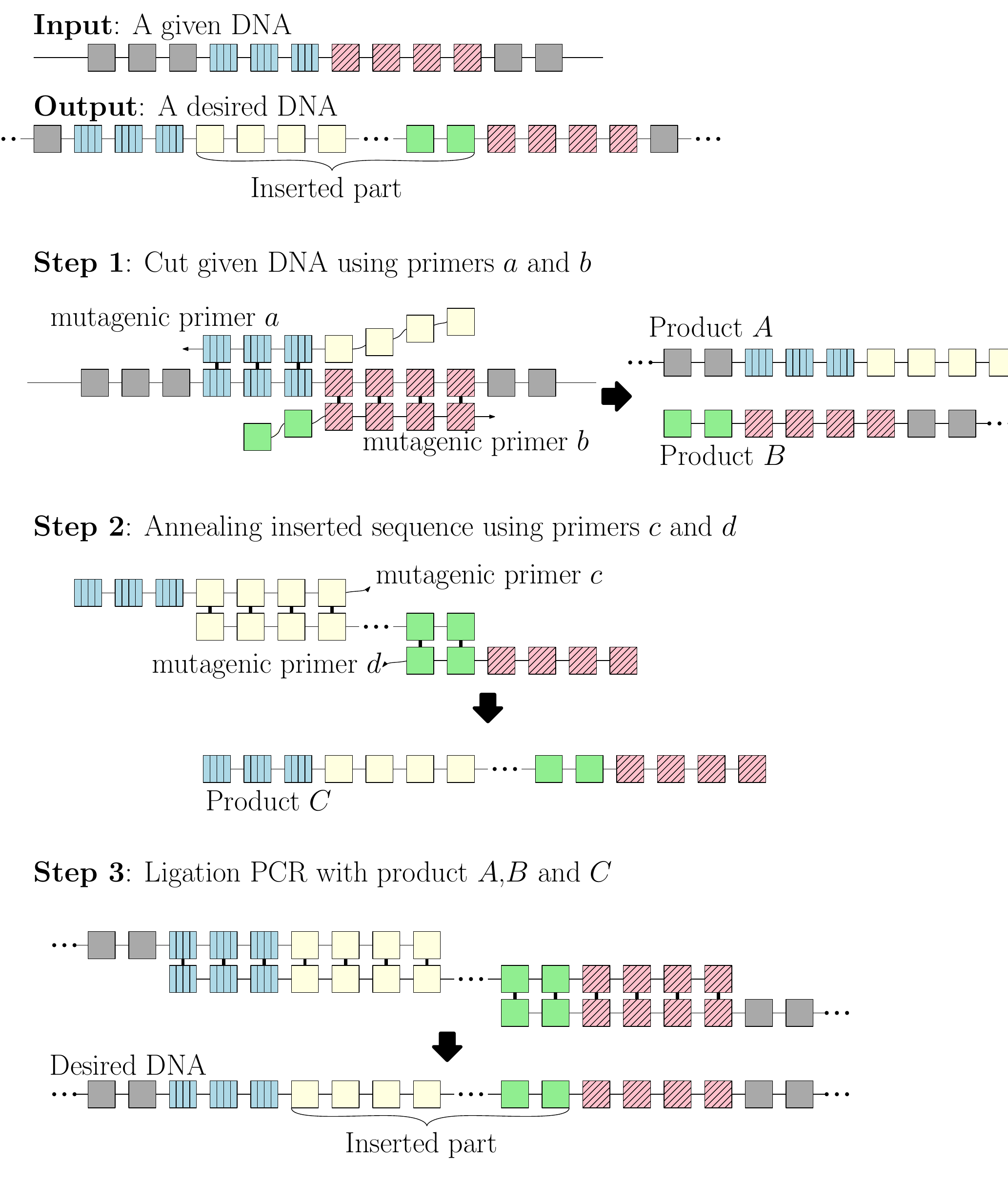}
\caption{An example of site-directed insertion mutagenesis by PCR. Given a DNA sequence and four predesigned primers $a,b,c$ and $d$, two primers~$a$ and $b$ lead the DNA sequence to break and extend into two products~$A$ and $B$ under enzymatic reaction (Step 1). Two primers~$c$ and $d$ complementarily bind to desired insertion region according to the overlapping region and extend into product~$C$ (Step 2). Then, the products~$A,B$ and $C$ join together to create recombinant DNA that include the desired insertion (Step 3).}
\label{fig:mutagenesis}
\end{figure}

In formal language theory, the insertion of a string means
adding a substring to a given string and deletion of a string
means removing a substring. The insertions occurring in
DNA strands are in some sense context-sensitive 
and 
Kari and Thierrin~\cite{KariT96} modeled
such bio-operations using {\em contextual insertions
and deletions\/}~\cite{Galiukschov1981,Paun84}. 
% JOURNAL POISTA 
A finite set of insertion-deletion rules, together
with a finite set of axioms, can be viewed as
a language generating device. Contextual insertion-deletion
systems in the study of molecular computing have been
used e.g. by Daley et al.~\cite{DaleyKGS99}, 
Enaganti et al.~\cite{Enaganti2}, 
Krassovitskiy 
et al.~\cite{KrassobitskiyRV11} and Takahara and
Yokomori~\cite{TakaharaY03}. Further theoretical studies on the
computational power of insertion-deletion systems
were done e.g. by Margenstern et al.~\cite{MargensternPRV05} and
P\v{a}un et al.~\cite{PaunPY08}.
Enaganti et al.~\cite{Enaganti2} have studied related operations
to model the action of DNA polymeraze enzymes.

We formalize site-directed insertion mutagenesis by PCR 
and define a new operation \emph{outfix-guided insertion}  that 
{\em partially\/} inserts 
a string $y$ into a string $x$ when two non-empty substrings of 
$x$ match with an outfix of $y$, see~\rfig{fig:kaksi}~(b).  
We will consider also variants where only a prefix or a suffix of $y$
must match with a non-empty substring of $x$ at the position where
the insertion occurs.
The outfix-guided insertion is an overlapping variant of the ordinary
insertion operation, analogously as the overlap assembly
\cite{Csuhaj2007,EnagantiIKK15,Enaganti2016}, 
 cf.~\rfig{fig:kaksi}~(a), is a 
 variant of the ordinary string concatenation operation. 
An operation equivalent to overlap assembly
has been considered under the name
 chop of languages by Holzer et al.~\cite{Holzer1}. 
 Holzer and Jacobi~\cite{Holzer2}
have given tight state complexity bounds for 
a variant of the operation where the overlapping string always
has length one.
Furthermore, C\v{a}r\v{a}u\c{s}u and P\v{a}un~\cite{CarPaun} have considered
 another related operation called short concatenation.
%\rfig{fig:OGI} depicts that how the 
%overlap catenation and outfix-guided operations work on 
%two strings~$x$ and $y$.

\begin{figure}
\centering
\includegraphics[width=11cm]{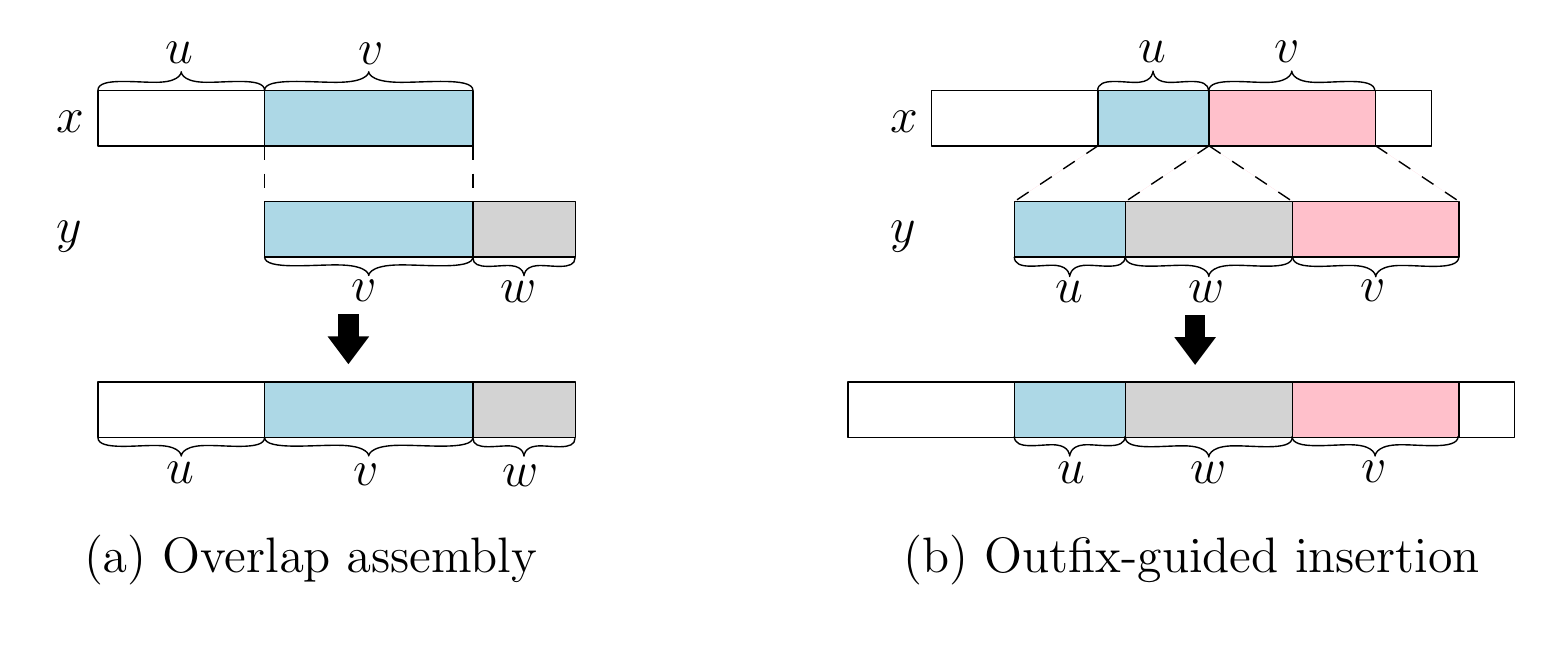}
\caption{%An example of the overlap catenation and outfix-guided 
%insertion operations on two strings~$x$ and $y$. 
%(a) If suffix~$v$ of $x$ overlaps with the prefix~$v$ of $y$, then the 
%overlap assembly operation 
%partially catenates $x$ and $y$ appending suffix $w$ of $y$ to $x$.
%except the overlappted substring~$v$. 
(a) If suffix $v$ of $x$ overlaps with prefix $v$ of $y$, then the
overlap assembly operation appends suffix $w$ of $y$ to $x$.
(b) If the outfix of $y$ consisting of~$u$ and $v$ matches 
the substring~$uv$ of $x$, then the outfix-guided insertion operation 
inserts $w$  between $u$ and $v$ in the string $x$.}
\label{fig:kaksi}
\end{figure}

This paper investigates the language
theoretic closure properties of outfix-guided insertion and 
iterated outfix-guided insertion. Note that since outfix-guided
insertion, similarly as overlap assembly, is not associative,
there are more than one way to define the iteration of the operation.
We consider a general outfix-guided insertion closure of a language
which is defined analogously as the iterated overlap 
assembly by Enaganti et al.~\cite{EnagantiIKK15}.
Iterated (overlap)  assembly is defined by
Csuhaj-Varju et al.~\cite{Csuhaj2007} in a different way, which we call
right one-sided iteration of an operation.

It is fairly easy to see that regular languages are closed under
outfix-guided insertion. Closure of regular languages under
iterated outfix-guided insertion  turns out to be less obvious.
It is well known that regular languages are not closed under
the iteration of the ordinary (non-overlapping) insertion
operation~\cite{Kari91} and it is also fairly easy to establish
that iterated prefix-guided (or suffix-guided) insertion
does not preserve regularity. However,  the known 
counter-examples, nor their variants, do not work for iterated
outfix-guided insertion. Here using a more involved construction
we show that there exists even a finite language $L$ such that
the outfix-guided insertion closure of $L$ is non-regular.
On the other hand, we show that the outfix-guided insertion closure
of a unary regular language is always regular.

It is well known that context-free languages are closed under
ordinary (non-iterated) insertion. We show
that context-free languages are not closed under outfix-guided
insertion, nor under prefix-guided
or suffix-guided insertion. 
The outfix-guided insertion of a regular language into
a context-free language (or vice versa) is always context-free.
Also we establish that a similar closure property does not hold
for the deterministic context-free  and the regular languages.
Finally in section~6 we consider decision problems on
whether a language is closed under outfix-guided insertion (or 
og-closed).
We give a polynomial time algorithm to decide whether a language
recognized by a deterministic finite automaton (DFA) is
og-closed. We show that for a given context-free language $L$ the
question of deciding whether or not $L$ is og-closed is undecidable.

\section{Preliminaries}

We assume the reader to be familiar with the basics of formal
languages, in particular, with the classes
of regular languages and (deterministic)
context-free languages~\cite{Shallit09,Yu97}. 
Here we briefly recall some definitions and in the next
section formally define the  the main notion of outfix-guided insertion
and the corresponding iterated operations.

The symbol $\Sigma$ stands always for a finite alphabet, 
$\Sigma^*$ (respectively, $\Sigma^+$) is the set of strings
(respectively, non-empty strings) over $\Sigma$,
$|w|$ is the length of a string $w \in \Sigma^*$, 
$w^R$ is the reversal of $w$ and
$\varepsilon$ is the empty string.
%When there is no danger of confusion, a singleton language
%$\{ w \}$ can be denoted by $w$ for short. 
For $i \in \mathbb{N}$,
$\Sigma^{\geq i}$ is the set of strings of length at least $i$.

If $w = xy$, $x, y \in \Sigma^*$, we say that $x$ is a 
{\em prefix\/} of $w$ and $y$ is a {\em suffix\/} of $w$. If
$w = xyz$, $x, y, z \in \Sigma^*$, we say that
$(x,z)$ is an {\em outfix\/} of $w$. If additionally
$x \neq \varepsilon$ and $z \neq \varepsilon$, 
$(x,z)$ is a {\em non-trivial outfix\/} of
$w$. 
Sometimes (in particular, when talking about the outfix-guided
insertion operation) we refer to
an outfix $(x, z)$ simply as a string $xz$ (when it is known from the
context what are the components $x$ and $z$).

\begin{example}
Let $\Sigma = \{ a, b, c \}$ and $w = abca$. The non-trivial
outfixes of $w$ are $(a, a)$, $(ab,a)$,  $(a,ca)$,
$(a, bca)$, $(ab, ca)$, and $(abc, a)$. 
Note
that all prefixes and suffixes of a string $u$ are outfixes of
$u$ but prefixes and suffixes are not, in general, non-trivial
outfixes. 
A string $u$ represents one or more
non-trivial outfixes of $u$ if and only
if $|u| \geq 2$.
\end{example}

To conclude this section we fix some basic notation
on finite automata.

A {\em nondeterministic finite automaton\/} (NFA) is a tuple
$A = (\Sigma, Q, \delta, q_0, F)$ where $\Sigma$ is
the input alphabet, $Q$ is the finite set of states,
$\delta \colon Q \times \Sigma \rightarrow 2^Q$ is
the transition function, $q_0 \in Q$ is the
initial state and $F \subseteq Q$ is the set of final states.
In the usual way $\delta$ is extended as a function
$Q \times \Sigma^* \rightarrow 2^Q$ and the {\em language accepted
by\/} $A$ is $L(A) = \{ w \in \Sigma^* \mid \delta(q_0, w) \cap F
\neq \emptyset \}$. The automaton $A$ is a {\em deterministic finite
automaton\/} (DFA) if $|\delta(q, a)| \leq 1$ for all $q \in Q$
and $a \in \Sigma$.

It is well known that the deterministic and nondeterministic
finite automata recognize the class of {\em regular languages.}
A (nondeterministic) {\em pushdown automaton\/} (PDA) is an
extension of a finite automaton that reads the input
left-to-right and in addition to the
finite  state memory has  access to a pushdown 
store~\cite{Shallit09}. The nondeterministic PDAs define
the class of {\em context-free languages\/} (CFL). 
Deterministic PDAs define
the class of  {\em deterministic context-free languages\/} (DCFL) and
this is a proper subclass of CFL \cite{Shallit09}.

\section{Definition of (Iterated) Outfix-Guided Insertion}

%\subsection{Outfix-guided insertion}

We begin by recalling some notions associated with the
non-overlapping insertion operation.\footnote{We use the
term ``non-overlapping'' to make the distinction clear to
outfix-guided insertion which will be the main topic of this paper.}
More details
on variants of the insertion operation and
iterated insertion can be found in~\cite{Kari91}.

The  non-overlapping
insertion of a string $y$ into a string $x$ is defined
as the set of strings $x \stackrel{\rm nol}{\leftarrow}
 y = \{ x_1 y x_2 \mid x = x_1 x_2 \}$.
The insertion operation is extended in the natural way to
languages by setting
$
L_1 \stackrel{\rm nol}{\leftarrow} L_2 = \bigcup_{x \in L_1, y \in L_2} x 
\stackrel{\rm nol}{\leftarrow} y.
$
Following Kari~\cite{Kari91} we define the {\em left-iterated
insertion \/} of $L_2$ into  $L_1$ inductively by
setting
$$
\mathbb{LI}^{(0)}(L_1, L_2) = L_1 \mbox{ and }
\mathbb{LI}^{(i+1)}(L_1, L_2) = \mathbb{LI}^{(i)}(L_1, L_2) 
\stackrel{\rm nol}{\leftarrow}
L_2, \; i \geq 0.
$$
The {\em left-iterated insertion closure of $L_2$ into $L_1$\/} is
$
\mathbb{LI}^*(L_1, L_2) = \bigcup_{i = 0}^{\infty} \mathbb{LI}^{(i)}(L_1, L_2).
$
It is well known that the iterated non-overlapping insertion operation does 
not preserve regularity \cite{Kari91,Haussler81}. 

\begin{example}
\label{etavallinen}
Let $\Sigma = \{ a, b \}$. The left-iterated insertion closure
of the string $ab$ into itself is non-regular because
 $\mathbb{LI}^*(ab, ab) \cap a^* b^* = 
\{ a^i b^i \mid i \geq 0 \}$.
\end{example}

Next we define the main notion of this paper which can be
viewed as a generalization of the overlap
assembly operation~\cite{Csuhaj2007,EnagantiIKK15}. 
An ``inside part'' of a string $y$ can be outfix-guided inserted
into a string $x$ if a
non-trivial outfix of $y$   overlaps with a substring
of $x$ in a position where the insertion occurs.
This differs from contextual insertion (as defined in~\cite{KariT96})
in the sense that $y$ must actually contain the outfix that is
matched with a substring of $x$ (and additionally \cite{KariT96}
specifies a set of contexts where an insertion can occur). 

\begin{definition}
\label{maindefin}
The {\em outfix-guided insertion\/} of a string $y$ into a string $x$
is defined as
$$
x \stackrel{\rm ogi}{\leftarrow} y = \{ x_1 u z v x_2 \mid x = x_1 u v x_2,
\; y = uzv, \; u \neq \varepsilon, v \neq \varepsilon \}.
$$
\end{definition}

Using the above notations, when
$w =  x_1 u z v x_2$ is the result of outfix-guided inserting
$y = uzv$ into $x = x_1 uv x_2$ we say
that the non-empty substrings
 $u$ and $v$ are the {\em matched parts.} Note that the
 matched parts form a non-trivial outfix
of the inserted string $y$.
When speaking of matched parts we refer to specific substring
occurrences in the string $x$ that are matched with a non-empty prefix
and suffix of $y$, respectively. When string $u$ occurs as
a substring $x$ after a prefix of length $i \in \mathbb{N}$, this
could be specified as a pair $(u, i+1)$ to indicate that the
occurrence begins at position $i+1$.

As  variants of outfix-guided insertion we define operations where
only a non-empty  prefix or a non-empty
  suffix of the inserted string needs to be
matched with a substring in the original string. Naturally
it would be possible to define further variants of outfix-guided
insertion, e.g., by allowing the matched outfix to be empty.

\begin{definition}
\label{prefsuf}
The {\em prefix-guided insertion\/} of a string $y$ into a string $x$
is defined as
$$
x \stackrel{\rm pgi}{\leftarrow} y = \{ x_1 y_1 y_2 x_2 \mid x = x_1 y_1 x_2,
\; y = y_1 y_2, \; y_1 \neq \varepsilon \}.
$$
The {\em suffix-guided insertion\/} of a string $y$ into a string $x$
is defined as
$$
x \stackrel{\rm sgi}{\leftarrow} y = \{ x_1 y_1 y_2 x_2 \mid x = x_1 y_2 x_2,
\; y = y_1 y_2, \; y_2 \neq \varepsilon \}.
$$
\end{definition}

The ordinary insertion, outfix-guided insertion and suffix-guided
insertion operations, respectively,
 are illustrated in Fig.~\ref{fig:OGI}.

\begin{figure}
\centering
\includegraphics[width=11cm]{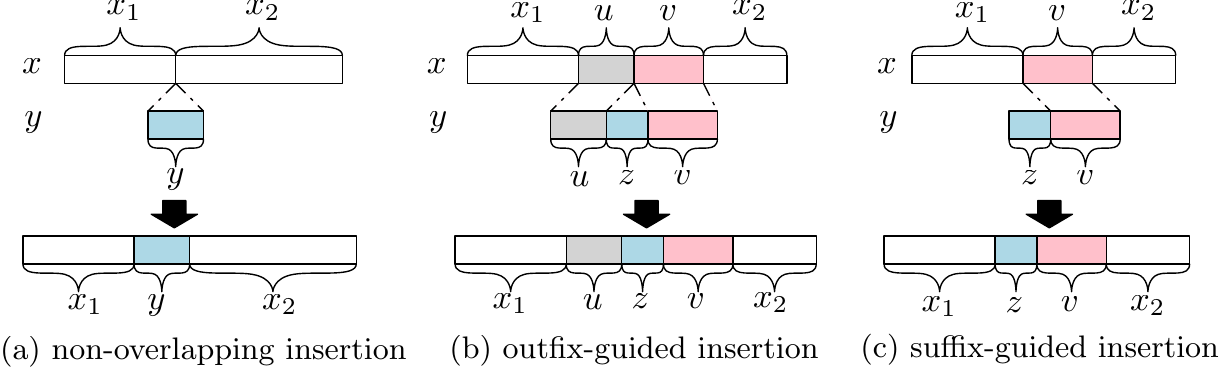}
\caption{%An example of the overlap catenation and outfix-guided 
%insertion operations on two strings~$x$ and $y$. 
%(a) If suffix~$v$ of $x$ overlaps with the prefix~$v$ of $y$, then the 
%overlap assembly operation 
%partially catenates $x$ and $y$ appending suffix $w$ of $y$ to $x$.
%except the overlappted substring~$v$. 
(a) Non-overlapping insertion of string $y$ into string $x$.
(b) If the outfix of $y$ consisting of~$u$ and $v$ matches 
the substring~$uv$ of $x$, then the outfix-guided insertion operation 
inserts $z$  between $u$ and $v$ in the string $x$.
(c) If the suffix of $y$ consisting of $v$ matches a substring of $x$,
then the suffix-guided operation inserts the prefix $z$ of $y$ before
an occurrence of $v$ in $x$.}
\label{fig:OGI}
\end{figure}

Since we are mainly dealing with outfix-guided
insertion, in the following for notational simplicity
we write just $\leftarrow$ in place of
$\stackrel{\rm ogi}{\leftarrow}$.
Outfix-guided insertion
is extended in the usual way for languages by setting
$
L_1 \leftarrow L_2 = \bigcup_{w_i \in L_i, i = 1, 2}
w_1 \leftarrow w_2.
$
The prefix-guided and suffix-guided insertion 
operations $\stackrel{\rm pgi}{\leftarrow}$  and
$\stackrel{\rm sgi}{\leftarrow}$ are extended
for languages in the same way.

It is known that the ordinary insertion operation is not associative
and, not surprisingly, neither are the outfix-, prefix- and 
suffix-guided variants.
\begin{example}
\label{eetta2}
Outfix-guided (respectively, prefix-guided, suffix-guided)
insertion operation is not associative. 

Let
$\Sigma = \{ a, b, c, d \}$. Now
$abcd \in (acd \leftarrow abc) \leftarrow abcd$ but
$abc \leftarrow abcd = \emptyset$.  

Similarly we note that 
$abc \in (ab \stackrel{\rm pgi}{\leftarrow} bc)
\stackrel{\rm pgi}{\leftarrow} abc$ but
$bc \stackrel{\rm pgi}{\leftarrow} abc = \emptyset$ because
no substring of $bc$  is a prefix of $abc$. By reversing all the strings we get
an example that shows that suffix-guided insertion is non-associative.
\end{example}

Since outfix-guided (prefix-guided,
suffix-guided, respectively) insertion is non-associative we define
the $(i+1)$st iterated operation, analogously as was done with iterated
overlap assembly \cite{EnagantiIKK15}, by inserting to a string
of the  $i$th iteration another string of the $i$th iteration.

\begin{definition}
\label{datta2}
For a language $L$  define inductively 
$$\mathbb{OGI}^{(0)}(L) = L,  \;\; \mbox{ and }
%\item  $\mathbb{OGI}^{(1)}(L) = (L \leftarrow L) \cup L$ 
\mathbb{OGI}^{(i+1)}(L) = \mathbb{OGI}^{(i)}(L)
\leftarrow  \mathbb{OGI}^{(i)}(L), %\cup  \mathbb{OGI}^{(i)}(L), 
\;\;
i \geq 0.$$
The {\em  outfix-guided insertion closure\/} of $L$ is
$$
\mathbb{OGI}^*(L) = \bigcup_{i=0}^{\infty}  \mathbb{OGI}^{(i)}(L).
$$

The {\em prefix-guided insertion closure\/} of $L$, 
$\mathbb{PGI}^*(L)$,
(respectively, {\em suffix-guided insertion closure\/}
of $L$, $\mathbb{SGI}^*(L)$) is defined as above by replacing
$\leftarrow$ everywhere
with $\stackrel{\rm pgi}{\leftarrow}$ (respectively,
with $\stackrel{\rm sgi}{\leftarrow}$).
\end{definition}

Recall that the left-iterated non-overlapping insertion~\cite{Kari91} discussed
above uses two argument languages, and the same is true
for  the left- and right-iterated outfix-guided
insertion introduced below in Definition~\ref{datta5}.
One of the arguments can be viewed as the ``target'' of the insertions,
and the other as the ``source'' of the inserted strings.
The unrestricted insertion closures of
Definition~\ref{datta2} are defined for one argument language
because, roughly speaking, the $i+1$st stage uses the $i$th stage
both as the target and the source of the insertion.
%KOHTA

For talking about specific iterated outfix-guided insertions,
we use the notation
$x \stackrel{[y]}{\Rightarrow} z$ to indicate
that string $z$ is in $x \leftarrow y$, $x, y, z \in \Sigma^+$.
%When it is clear from the context that we are talking about
%outfix-guided insertion the notation is simplified
%to $x \stackrel{[y]}{\Rightarrow} z$. 
A sequence of 
steps  
$$x \stackrel{[y_1]}{\Rightarrow} z_1
\stackrel{[y_2]}{\Rightarrow} z_2 \stackrel{[y_3]}{\Rightarrow}
\cdots \stackrel{[y_m]}{\Rightarrow} z_m, \;\; m \geq 1,$$ 
is called a {\em derivation\/} of $z_m$ from $x$.

When we want to specify the matched substrings, they
are  indicated by underlining. If $x = x_1 u v x_2$
derives $z$ by inserting $uyv$ (where $u$ and $v$ are the matched
prefix and suffix, respectively,) this is denoted
$$
x_1 \underline{u} \underline{v} x_2
\stackrel{[\underline{u} y \underline{v}]}{\Rightarrow} z.
$$
Also, sometimes underlining is done only in the inserted
string if this makes it clear what must be the matched
substrings in the original string.

By a {\em trivial derivation step\/} we mean a derivation
$x \stackrel{[x]}{\Rightarrow} x$ where $x$ is obtained from
itself by selecting the outfix to consist of the entire
string $x$. Every string of
length at least two can be obtained from itself using a trivial
derivation step. This means, in particular, that
for any language $L$,
$
L - (\Sigma \cup \{ \varepsilon \}) \subseteq \mathbb{OGI}^{(1)}(L).
$
The sets $\mathbb{OGI}^{(i)}(L)$, $i \geq 1$, cannot contain strings
of length less than two and, consequently 
$
\mathbb{OGI}^{(i)}(L) \subseteq \mathbb{OGI}^{(i+1)}(L)$, 
 for all  $i \geq 1$.

Definition~\ref{datta2} iterates the outfix-guided insertion by
inserting a string from the $i$th iteration of the operation into
another string in the $i$th iteration. 
 Since the operation is non-associative
we can define iterated insertion in more than one way. 
The  right one-sided iterated insertion of $L_2$ into $L_1$
 inserts in an outfix-guided way
 a string of $L_2$ into $L_1$ and then  iteratively
 inserts a string obtained in the process into
$L_1$. The  left one-sided iterated outfix-guided insertion is defined
symmetrically. In fact, when
considering iterated ordinary insertion,
Kari~\cite{Kari91} uses a definition that we  call
left one-sided iterated insertion (and the operation
was defined as $\mathbb{LI}^*(L_1, L_2)$ above).
Csuhaj-Varju et al.~\cite{Csuhaj2007} define iterated overlap
 assembly using  right one-sided
iteration of the operation.

\begin{definition}
\label{datta5}
{\rm 
Let $L_1$ and $L_2$ be languages. The 
{\em right one-sided iterated insertion of $L_2$ into $L_1$\/} is
defined  inductively by setting
$\mathbb{ROGI}^{(0)}(L_1, L_2) = L_2$ and
$\mathbb{ROGI}^{(i+1)}(L_1, L_2) = L_1 \leftarrow
\mathbb{ROGI}^{(i)}(L_1, L_2)$, $i \geq 0$.
The {\em right one-sided insertion closure\/} of $L_2$ into $L_1$ 
is
$
\mathbb{ROGI}^{*}(L_1, L_2) = 
\bigcup_{i=0}^\infty \mathbb{ROGI}^{(i)}(L_1, L_2).
$

 The 
{\em left one-sided iterated insertion of $L_2$ into $L_1$\/} is
defined  inductively by setting
$\mathbb{LOGI}^{(0)}(L_1, L_2) = L_1$ and
$\mathbb{LOGI}^{(i+1)}(L_1, L_2) = 
\mathbb{LOGI}^{(i)}(L_1, L_2) \leftarrow L_2$, $i \geq 0$.
The {\em left one-sided insertion closure\/} of $L_2$ into $L_1$ 
is
$
\mathbb{LOGI}^{*}(L_1, L_2) = 
\bigcup_{i=0}^\infty \mathbb{LOGI}^{(i)}(L_1, L_2).
$
} %end-rm
\end{definition}

%Note that, by definition,
%$L - (\Sigma \cup \{ \varepsilon \})
%\subseteq \mathbb{OGI}^{(1)}(L)$ but the 
%left/right one-sided iterated insertion of $L$ into itself does not need
%to contain all strings of  $L - (\Sigma \cup \{ \varepsilon \})$.

%The one-sided iterated insertion closures are defined for
% two argument languages. Naturally it would be possible to extend
%also the definition of unrestricted iterated outfix-guided insertion
%for two arguments. 
Note that for any language $L$, 
$
\mathbb{OGI}^{(1)}(L) =
\mathbb{LOGI}^{(1)}(L, L) = 
\mathbb{ROGI}^{(1)}(L, L) = L \leftarrow L.
$

The iterated version of unrestricted
outfix-guided insertion is considerably more general
than the one-sided variants. For any language $L$,
$\mathbb{ROGI}^*(L, L)$ and $\mathbb{LOGI}^*(L, L)$ are
always included in $\mathbb{OGI}^*(L)$ and, in general, the
inclusions can be strict.

\begin{example}
\label{example2}
Let $\Sigma = \{ a, b, c \}$ and $L_1 = \{ aa cc \}$,
$L_2 = \{ a b c \}$.
Now 
$\mathbb{ROGI}^*(L_1, L_2) = a^+ b c^+.$
For example, by inserting $abc$ into
$aa cc$ derives $aabcc$: 
\begin{equation}
\label{eetta19}
a \underline{ac} c \stackrel{[\underline{a}b\underline{c}]}{\Rightarrow}
aabcc.
\end{equation}
A right one-sided iterated insertion of $L_2$ into $L_1$
 could then be continued, for example, as
$
a \underline{acc} \stackrel{[\underline{a}ab\underline{cc}]}{\Rightarrow}
aaabcc$.
In this way right one-sided derivations can generate all strings
of $a^+ b c^+$.  Since all inserted strings must contain the
symbol 
$b$,
the first matched part must always belong to $a^+$ and the
second matched part must belong to $c^+$. This means that
$\mathbb{ROGI}^*(L_1, L_2) \subseteq a^+ b c^+$.

On the other hand,
$
\mathbb{LOGI}^*(L_1, L_2) = \{ aa b cc, aa cc \}.
$
In a left one-sided iterated insertion of $L_2$ into
$L_1$, the only non-trivial   derivation step
is~(\ref{eetta19}).

By denoting $L_3 = L_1 \cup L_2$, it can be verified that
$$
\mathbb{OGI}^*(L_3) = 
\mathbb{ROGI}^*(L_3, L_3) = 
\mathbb{LOGI}^*(L_3, L_3) 
= a^+ b c^+ \cup a^2 a^* c^2 c^*.
$$
%} %end-rm
\end{example}

The next example illustrates that unrestricted
outfix-guided insertion closure of a language $L'$ can
be larger than $\mathbb{LOGI}^*(L', L')$.
The language $L$ used in the proof of Theorem~\ref{tatta1}
in the next section gives an example where the unrestricted
insertion closure is larger than $\mathbb{ROGI}^*(L, L)$ (as
 explained before Proposition~\ref{latta18}).

\begin{example}
	\label{example3}
Let $ \Sigma = \{ a, b, c, d, e, f \}$
and $L' = \{ abce, bcde, acdef \}$.
We note that
$
a \underline{bce} \stackrel{[\underline{bc} d \underline{e}]}{\Rightarrow}
abcde$.
Furthermore, it is easy to verify
that by outfix-guided inserting strings of $L'$ into $L' \cup \{ abcde \}$
one cannot produce more strings and, thus, 
$\mathbb{LOGI}^*(L', L') = L' \cup \{ abcde \}$.
On the other hand, we have
$$
\underline{acde} f 
\stackrel{ [\underline{a} b \underline{cde}]}{\Rightarrow} abcdef \in
\mathbb{OGI}^{(2)}(L').
$$
\end{example}

%\begin{example}
%\label{example3}
%THIS HAS A PROBLEM
%{\rm 
%Let $ \Sigma = \{ a, b, c, d, e, f \}$
%and $L = \{ acd, b\!f\!c, abc, be\!f \}$. Then
%$\mathbb{LOGI}^*(L, L)$ is the finite language
%$L \cup \{ abcd, be\!f\!c, ab\!f\!cd \}$. 
%The strings $abcd$ and $be\!f\!c$ belong to
%$\mathbb{LOGI}^{(1)}(L, L) = L \leftarrow L$ 
%and continuing with a left one-sided iteration we have
%$$
%a \underline{bc} d \stackrel{\underline{b} f \underline{c}}{\Rightarrow}
%ab\!f\!cd \in \mathbb{LOGI}^{(2)}(L, L).
%$$
%Further left one-sided iterations (that insert strings of the
%original language $L$) cannot produce additional strings.
%
%Similarly it is seen that
%$\mathbb{ROGI}^*(L, L) = L
%\cup \{ abcd, be\!f\!c, abe\!f\!c \}$.
%By definition, $\mathbb{OGI}^*(L)$ contains the union of the sets
%$\mathbb{LOGI}^{*}(L, L)$ and
% $\mathbb{ROGI}^*(L, L)$
%and, for example, additionally we have
%$
%a \underline{bc} d \stackrel{\underline{b}ef\underline{c}}{\Rightarrow}
%abe\!f\!cd \in \mathbb{OGI}^{(2)}(L).
%$
%The language $\mathbb{ROGI}^*(L, L)$.
%}
%\end{example}

\if01
$(L_1 \leftarrow L_2)^{(0)} = L_1$\\
 $(L_1 \leftarrow L_2)^{(i+1)} = L_1 \leftarrow  (L_1 \leftarrow L_2)^{(i)}$,
$i \geq 0$

Define
$$
 (L_1 \leftarrow L_2)^{*} = \bigcup_{i=0}^{\infty} (L_1 \leftarrow L_2)^{(i)}
$$

Choose $L_2 = \{ a \$ b \}$, $L_1 = \{ acdb, cabd \}$.

Then
$$
(L_1 \leftarrow L_2)^{*} = \{ (ca)^i \$ (bd)^i \mid i \geq 0 \}
\cup  \{ a(ca)^i \$ (bd)^i b \mid i \geq 0 \}
$$

{\tt The construction relies on the property that all strings that are
inserted will have the marker \$, i.e., strings are inserted into $L_1$
but strings of $L_1$ cannot be inserted into itself. Is it possible
to somehow get rid of this restriction? (i.e., make the construction
work with ``normal'' iterated OGI)}
\fi

\section{Outfix-Guided Insertion and Regular Languages
}

As can be expected, the family of regular languages is
closed under the outfix-guided (prefix-guided, suffix-guided,
respectively) insertion operation.
On the other hand, the answer to the question whether regular languages
are closed under iterated outfix-guided insertion  seems less clear.
From Kari~\cite{Kari91}  we recall
that it is easy to construct examples that establish the
non-closure of regular languages under iterated non-overlapping
insertion. Using variants of such examples we see
that the prefix-guided (or suffix-guided)
insertion closure of a singleton language may be
non-regular. 

On the other hand, analogous straightforward
counter-examples do not work for
the unrestricted outfix-guided insertion closure.
Using a more involved construction we establish that 
the outfix-guided insertion closure of a finite language need
 not be regular. The non-closure of regular languages under
 right one-sided insertion closure is established by a more
 straightforward construction 
 (Proposition~\ref{latta18}).

% It seems difficult to make constructions like the
%one used in Lemma~\ref{latta1} to work for outfix-guided 
%insertion because
%on strings with iterated insertions,  the computations
%on corresponding prefix-suffix pairs seem to
%depend on each other, i.e., when
%processing the parts inserted in between, an NFA would need to keep
%track of such pairs,  as opposed to keeping track  simply of a set
%of states. 

We begin by showing that regular languages are
closed under non-iterated outfix-guided
insertion. The proof is not surprising but we give an explicit
construction because, essentially, the same construction will be used
to show   in Theorem~\ref{CFREGtheorem}
that the outfix-guided
insertion of a regular (respectively, context-free)
language into a context-free (respectively, regular) language
is always context-free, and for the polynomial time algorithm
to decide whether the language recognized by a DFA is closed under
outfix-guided insertion in section~\ref{kuusi}.

\begin{lemma}
\label{latta20}
If $L_1$ and $L_2$ are regular, then so is $L_1 \leftarrow L_2$.
\end{lemma}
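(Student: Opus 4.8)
The plan is to construct a nondeterministic finite automaton $B$ for $L_1 \leftarrow L_2$ as a product of automata for the two argument languages equipped with a \emph{phase counter}. Fix NFAs $A_i = (\Sigma, Q_i, \delta_i, s_i, F_i)$ with $L(A_i) = L_i$ for $i = 1, 2$. Every string of $L_1 \leftarrow L_2$ factors as $x_1 u z v x_2$, where $x = x_1 u v x_2 \in L_1$, $y = u z v \in L_2$, and $u, v \neq \varepsilon$; read from left to right it consists of five consecutive segments $x_1, u, z, v, x_2$. The key observation is that if we run $A_1$ and $A_2$ in parallel on the output and let each consume symbols only on the segments belonging to its own argument, then $A_1$ reads exactly $x = x_1 u v x_2$ (it is ``frozen'' during the segment $z$), while $A_2$ reads exactly $y = u z v$ (it is held at $s_2$ during $x_1$, before it has started, and held after it has finished during $x_2$).

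I would therefore take $B$ to have state set $Q_1 \times Q_2 \times \{1,2,3,4,5\}$, where the third coordinate is the phase counter recording the current segment, initial state $(s_1, s_2, 1)$, and the following behaviour. In phase $1$ (reading $x_1$) only the $A_1$-component moves and the $A_2$-component stays at $s_2$. The matched prefix $u$ (phase $2$) and matched suffix $v$ (phase $4$) are the segments on which both components advance; to guarantee $u, v \neq \varepsilon$ I would fold each phase change into a genuine letter-reading transition, so that entering phase $2$ (respectively phase $4$) consumes at least one symbol driving both $\delta_1$ and $\delta_2$. In phase $3$ (reading $z$) only the $A_2$-component moves while the $A_1$-component is held fixed, and since $z$ may be empty I allow an $\varepsilon$-transition skipping directly from phase $2$ to phase $4$. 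In phase $5$ (reading $x_2$) only the $A_1$-component moves while the $A_2$-component is held fixed; an $\varepsilon$-transition from phase $4$ to phase $5$ permits $x_2 = \varepsilon$. The accepting states are the triples $(p, q, 5)$ with $p \in F_1$ and $q \in F_2$, which records precisely that $A_1$ accepted $x$ and $A_2$ accepted $y$.

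With these definitions a routine check gives $L(B) = L_1 \leftarrow L_2$: an accepting computation reads an output of the form $x_1 u z v x_2$, its $A_1$-track witnesses $x_1 u v x_2 \in L_1$, its $A_2$-track witnesses $u z v \in L_2$, and the phase-$2$ and phase-$4$ entry transitions force $u, v \neq \varepsilon$; conversely, any factorization meeting the conditions of Definition~\ref{maindefin} yields an accepting computation of $B$. Since $\varepsilon$-NFAs recognize exactly the regular languages, $L_1 \leftarrow L_2$ is regular.

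I expect the only delicate point to be the bookkeeping that keeps $A_1$'s computation on $x = x_1 u v x_2$ consistent across the ``gap'' created by $z$: the $A_1$-component must resume in phase $4$ from exactly the state it reached at the end of phase $2$, which is the reason for processing $z$ in a dedicated phase that touches only the $A_2$-component. The nonemptiness requirements on $u$ and $v$ are the other place needing care, and folding them into letter-reading phase-entry transitions (rather than using $\varepsilon$-moves there) handles them cleanly. Everything else is the standard verification that a product construction faithfully simulates both automata.
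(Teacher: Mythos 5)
Your overall architecture is the same as the paper's: a product automaton that tracks both argument automata together with which of the five segments $x_1, u, z, v, x_2$ is currently being read (the paper encodes the phase with the markers $\clubsuit$, $\heartsuit$ and ``barred'' copies of the state sets rather than a counter). But your handling of the case $z = \varepsilon$ breaks the construction. You correctly insist that entering phase $2$ and entering phase $4$ must be letter-reading transitions, precisely so that $u \neq \varepsilon$ and $v \neq \varepsilon$ are forced; yet you then add an $\varepsilon$-transition from phase $2$ straight into phase $4$. Since your exit from phase $4$ into phase $5$ is also an $\varepsilon$-transition, an accepting run may chain the two $\varepsilon$-moves: read $x_1$, enter phase $2$ reading $u$, $\varepsilon$-move to phase $4$, $\varepsilon$-move to phase $5$, read $x_2$, and accept. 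Such a run reads no symbol of $v$ at all, so your automaton accepts every string $x_1 u x_2$ with $x_1 u x_2 \in L_1$, $u \in L_2$, $u \neq \varepsilon$ --- strings that need not lie in $L_1 \leftarrow L_2$, whose definition demands $v \neq \varepsilon$ as well. Concretely, take $L_1 = \{ab\}$ and $L_2 = \{a\}$: then $L_1 \leftarrow L_2 = \emptyset$, because no string of length one factors as $uzv$ with $u,v$ both non-empty, but your automaton accepts $ab$ via the run just described. So in general $L(B)$ properly contains $L_1 \leftarrow L_2$, and the claimed ``routine check'' of the inclusion $L(B) \subseteq L_1 \leftarrow L_2$ fails.

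The repair is exactly the device the paper uses: handle $z = \varepsilon$ not by an $\varepsilon$-move but by letter-reading transitions that jump from a phase-$2$ state directly into a phase-$4$ state while advancing \emph{both} automata on the symbol read (that symbol being the first letter of $v$); this is the ``third option'' in the paper's rules (ii). With that change, every entry into phase $4$ consumes a symbol on all branches, so $v \neq \varepsilon$ is guaranteed, and the rest of your verification goes through. (Your $\varepsilon$-transition from phase $4$ to phase $5$ is harmless by itself, since $x_2$ is allowed to be empty; only the entry into phase $4$ needs this care.)
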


\begin{proof}
Let $L_1$ be recognized by an NFA $A = (\Sigma, Q, \delta, q_0, F_A)$
and $L_2$ be recognized by an NFA $B = (\Sigma, P, \gamma, p_0, F_B)$.
Denote $\overline{Q} =  \{ \overline{q} \mid q \in Q \}$ and
$\overline{P} = \{ \overline{p} \mid p \in P \}$. Here $Q \cup P$
is disjoint with $\overline{Q} \cup \overline{P}$ and $\clubsuit$,
$\heartsuit$ are new symbols not occurring in any of the sets.

For the language $L_1 \leftarrow L_2$ we construct an NFA
$C = (\Sigma, R, \omega, r_0, F_C)$ where
$$
R = Q \times (P \cup \overline{P} \cup \{ \clubsuit, \heartsuit \}) 
\cup \overline{Q} \times P,
$$
$$
F_C = \{ (q, \overline{p}) \mid
q \in F_A, p \in F_B \} \cup \{ (q, \heartsuit) \mid q \in F_A \},
$$
$r_0 = (q_0, \clubsuit)$,
and  for defining the transitions  of $\omega$ 
let $b$ be an arbitrary symbol of $\Sigma$. We set
\begin{enumerate}
\item for $q \in Q$: $\omega((q, \clubsuit), b) = 
\{ (q', \clubsuit) \mid q' \in \delta(q, b) \} \cup
\{ (q', p') \mid q' \in \delta(q, b), p' \in \gamma(p_0, b) \}$,
\item for $q \in Q$,  $p \in P$:
$\omega((q, p), b) = \{ (q', p') \mid q' \in \delta(q, b),
p' \in \gamma(p, b) \} \cup \{ (\overline{q}, p') \mid
p' \in \gamma(p, b) \} \cup \{ (q', \overline{p'}) \mid q' \in \delta(q, b),
p' \in \gamma(p, b) \}$,
\item  for $q \in Q$,  $p \in P$: $\omega((\overline{q}, p), b) 
= \{ \overline{q}, p') \mid
p' \in \gamma(p, b) \} \cup 
\{ (q', \overline{p'}) \mid q' \in \delta(q, b), p' \in \gamma(p, b) \}$,
\item  for $q \in Q$,  $p \in P$: $\omega((q, \overline{p}), b) = 
\{ (q', \overline{p'}) \mid q' \in \delta(q, b), p' \in \gamma(p, b) \} 
\cup Z_p$, where
$$
Z_p = \begin{cases}
\{ (q', \heartsuit) \mid q' \in \delta(q, b) \} & \text{ if }
p \in F_B,\\
\emptyset & \text{ if } p \not\in F_B,
\end{cases}
$$
\item for $q \in Q$: $\omega((q, \heartsuit), b) = 
\{ (q', \heartsuit) \mid q' \in \delta(q, b) \}$.
\end{enumerate}
All transitions not listed above are undefined.

We begin by verifying that $L(A) \leftarrow L(B) \subseteq L(C)$.
Consider a string $w = x_1 u z v x_2$ where $x_1 u v x_2 \in L(A)$
and $uzv \in L(B)$, $u, v \neq \varepsilon$. Roughly speaking,
$C$ uses the states of $Q \times \{ \clubsuit \}$ to process the
prefix $x_1$, the states of $Q \times P$ to process the following
substring $u$, the states of $\overline{Q} \times P$ to process
the substring $z$, the states of $Q \times \overline{P}$ to process
the substring $v$, and the states of $Q \times \{ \heartsuit \}$
to process the suffix $x_2$. Note that according to rules (ii), on
states of $Q \times P$ the NFA simulates $A$ in the first component
and $B$ in the second component of the states. According to
rules (iii), on states of $\overline{Q} \times P$, the NFA $C$ simulates
only $B$ in the second component, and according to rules (iv),
on states of $Q \times \overline{P}$ the NFA $C$ simulates again
both $A$ and $B$ (in the first and second component of the  state of $C$,
respectively).

In more detail, consider an accepting computation 
${\rm comp_A}(x_1 uv x_2)$ of $A$ on $x_1 u v x_2$
that reaches state $q_{x_1}$ (respectively, $q_u$, $q_v$, $q_{x_2}$)
after reading the prefix $x_1$ (respectively, $x_1 u$, 
$x_1 u v$, $x_1 u v x_2$). An accepting computation
of $C$ first reads $x_1$ using rules (i) and simulating 
the  computation ${\rm comp}_A(x_1 uv x_2)$ 
of $A$ on the prefix $x_1$, thus ending
in state $(q_{x_1}, \clubsuit)$. 

When reading the first symbol $b_1$
of $u$, again using a rule (i) the computation of $C$ goes
to a state $(q', p')$ where $q' \in \delta(q_{x_1}, b_1)$ and
the second component begins to simulate an accepting computation
of $B$ on $uzv$ in a state $p' \in \gamma(p_0, b_1)$.
The computation nondeterministically guesses when it sees the first
symbol of $z$, and using rules (ii) enters a state
$(\overline{q_u}, p')$ where $p'$  is the state of $B$ in an accepting
computation on $uzv$ after reading the first symbol of $z$. If $z = 
\varepsilon$, the computation guesses when it sees
the first symbol $b_2$ of $v$ and using the ``third option'' in the rules (ii), 
$C$ goes to a state $(q', p')$ where $q' \in \delta(q_{u}, b_2)$ and
$p'$ is a state that can be reached by $B$ after reading $ub_2$.

The computation processes the substring $z$ in a state
of $\{ \overline{q_u} \} \times P$ using rules (iii) and simulating
the computation of $B$ in the second component. When the
computation guesses that it sees the first symbol $b_2$ of $v$,
using the ``second part'' of the rules (iii) the NFA $C$ goes
to a state  $(q', p')$ where $q' \in \delta(q_{u}, b_2)$ and
$p'$ is a state that can be reached by $B$ after reading the
prefix $uzb_2$. Then, according to rules (iv), $C$ simulates
the computation ${\rm comp}_A(x_1 uv x_2)$
in the first component of the state and $B$ in the second component
of the state. Always when the second component is an element
of $F_B$, according to rules (iv), the computation may enter
a state of the form $(q', \heartsuit)$ that indicates that it has
finished reading the substring $uzv$. 

The remaining computation, using rules (v), simulates 
the computation ${\rm comp}_A(x_1 uv x_2)$ of $A$ on the
first component of the states. The choice of the final states then
guarantees that $C$ accepts in the state $(q_{x_2}, \heartsuit)$.
If $x_2 = \varepsilon$, then the computation of $C$ ends in an
accepting state $(q_{v}, p_f)$ where $p_f \in F_B$ is the state
at the end of the simulated computation of $B$ on $uzv$.

For the converse inclusion we note that the definition of the
transitions of $\omega$ guarantees that any computation of
$C$ ending in an accepting state, must have five parts
$P_1$, $P_2$, $P_3$, $P_4$ and $P_5$, where
 $P_1$  
uses states of $Q \times \{ \clubsuit \}$,  
 $P_2$ uses states of $Q \times P$, 
$P_3$ uses states of $\overline{Q} \times P$, 
$P_4$ uses states of $Q \times \overline{P}$ and
$P_5$ uses states of $Q \times \{ \heartsuit \}$.
The part $P_3$ may be empty if, according to rules (ii), the computation
jumps directly from a state of $P_2$ to a state of $P_4$ and the
part $P_5$ may be empty if the computation $P_4$
ends in an accepting state of the form
$(q, \overline{p})$  ($q \in F_A$, $p \in F_B$).

Since states of $P_1$  and $P_5$ simulate  only a computation of $A$,
states of $P_3$ simulate only a computation of $B$ and states of
$P_2$ and $P_4$ simulate both a computation of $A$ and a computation
of $B$, it is easy to verify that $C$ can have accepting computations
only on strings of $L(A) \leftarrow L(B)$.
\prend
\end{proof}

The result of Lemma~\ref{latta20} extends easily using induction:

\begin{proposition}
\label{patta1}
Suppose $L_1$ and $L_2$ are regular languages.
Then, for all $i \geq 0$, 
$\mathbb{OGI}^{(i)}(L_1)$, $\mathbb{ROGI}^{(i)}(L_1, L_2)$ and 
$\mathbb{LOGI}^{(i)}(L_1, L_2)$ are regular.
\end{proposition}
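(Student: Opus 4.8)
The plan is to prove all three statements by a routine induction on $i$, using Lemma~\ref{latta20} as the single engine that drives every inductive step. The key observation is that each of $\mathbb{OGI}^{(i+1)}$, $\mathbb{ROGI}^{(i+1)}$ and $\mathbb{LOGI}^{(i+1)}$ is obtained from the previous stage by exactly one application of the binary operation $\leftarrow$; since Lemma~\ref{latta20} says that $\leftarrow$ maps pairs of regular languages to a regular language, regularity is preserved at every finite stage.

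First I would handle $\mathbb{LOGI}^{(i)}(L_1, L_2)$. The base case $\mathbb{LOGI}^{(0)}(L_1, L_2) = L_1$ is regular by hypothesis. Assuming $\mathbb{LOGI}^{(i)}(L_1, L_2)$ is regular, the recurrence $\mathbb{LOGI}^{(i+1)}(L_1, L_2) = \mathbb{LOGI}^{(i)}(L_1, L_2) \leftarrow L_2$ exhibits the next stage as the outfix-guided insertion of the regular language $L_2$ into the (inductively) regular language $\mathbb{LOGI}^{(i)}(L_1, L_2)$, so Lemma~\ref{latta20} applies directly. The argument for $\mathbb{ROGI}^{(i)}(L_1, L_2)$ is symmetric: the base case $\mathbb{ROGI}^{(0)}(L_1, L_2) = L_2$ is regular, and in the step $\mathbb{ROGI}^{(i+1)}(L_1, L_2) = L_1 \leftarrow \mathbb{ROGI}^{(i)}(L_1, L_2)$ one inserts the inductively regular language $\mathbb{ROGI}^{(i)}(L_1, L_2)$ into the fixed regular language $L_1$, again invoking Lemma~\ref{latta20}.

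For $\mathbb{OGI}^{(i)}(L_1)$ the only point worth flagging is that the recurrence uses the same language in both argument positions, $\mathbb{OGI}^{(i+1)}(L_1) = \mathbb{OGI}^{(i)}(L_1) \leftarrow \mathbb{OGI}^{(i)}(L_1)$. This causes no difficulty, since Lemma~\ref{latta20} only requires that each of the two argument languages be regular and places no constraint on their being distinct; taking both arguments to be the inductively regular language $\mathbb{OGI}^{(i)}(L_1)$ immediately yields regularity of the next stage, with the base case $\mathbb{OGI}^{(0)}(L_1) = L_1$ again regular by assumption. I do not expect any genuine obstacle here: the entire content of the proposition is already packaged inside Lemma~\ref{latta20}, and the induction merely records that each iterated construction applies that lemma finitely many times. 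The substantive difficulty lies instead with the infinite unions $\mathbb{OGI}^*$, $\mathbb{ROGI}^*$ and $\mathbb{LOGI}^*$, which are emphatically not covered by this finite-stage argument and whose possible non-regularity is addressed separately elsewhere in the paper.
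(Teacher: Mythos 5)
Your proof is correct and is exactly the argument the paper intends: the paper itself dispatches Proposition~\ref{patta1} with the single remark that Lemma~\ref{latta20} ``extends easily using induction,'' which is precisely your finite induction on $i$ with the correct base cases $\mathbb{OGI}^{(0)}(L_1)=\mathbb{LOGI}^{(0)}(L_1,L_2)=L_1$ and $\mathbb{ROGI}^{(0)}(L_1,L_2)=L_2$. Your closing caveat that the argument says nothing about the infinite unions $\mathbb{OGI}^*$, $\mathbb{ROGI}^*$, $\mathbb{LOGI}^*$ is also exactly the distinction the paper draws.
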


%\begin{proof}
%Follows by induction and using Lemma~\ref{latta20}.
%\prend
%\end{proof}

A simplified variant of the proof of Lemma~\ref{latta20} allows us
to show that the prefix-guided (or suffix-guided) insertion of a regular
language into a regular language is regular. We leave the proof
as an exercise.

\begin{proposition}
\label{patta223}
If $L_1$ and $L_2$ are regular languages, then so are
$L_1 
\stackrel{\rm pgi}{\leftarrow} L_2$ and
$L_1 \stackrel{\rm sgi}{\leftarrow} L_2$.
\end{proposition}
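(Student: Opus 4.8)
The plan is to mimic the NFA construction in the proof of Lemma~\ref{latta20}, but with one fewer matched block. For $L_1 \stackrel{\rm pgi}{\leftarrow} L_2$ I would take NFAs $A = (\Sigma, Q, \delta, q_0, F_A)$ for $L_1$ and $B = (\Sigma, P, \gamma, p_0, F_B)$ for $L_2$ and build an NFA $C$ whose accepting computations split into four consecutive phases that read, in order, the four blocks of a result string $x_1 y_1 y_2 x_2$ (where $x = x_1 y_1 x_2 \in L_1$, $y = y_1 y_2 \in L_2$, $y_1 \neq \varepsilon$): phase~1 reads the prefix $x_1$ simulating only $A$ (states $Q \times \{\clubsuit\}$); phase~2 reads the matched prefix $y_1$ simulating $A$ and $B$ simultaneously (states $Q \times P$); phase~3 reads the inserted suffix $y_2$ simulating only $B$ while freezing the current state of $A$ (states $\overline{Q} \times P$); and phase~4 reads the remaining suffix $x_2$, resuming $A$ from its frozen state (states $Q \times \{\heartsuit\}$). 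Compared with Lemma~\ref{latta20}, the block where a suffix is re-matched against the original string disappears, so the states $Q \times \overline{P}$ are no longer needed.

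The transitions are the obvious restriction of those in Lemma~\ref{latta20}. From $Q \times \{\clubsuit\}$ one either continues simulating $A$, or, on the first symbol of $y_1$, starts $B$ from $p_0$ and enters $Q \times P$. From a state $(q, p)$ in $Q \times P$ one may continue both simulations, or, on the first symbol of $y_2$, freeze $A$ and move to $(\overline{q}, p')$ in $\overline{Q} \times P$, or (when $p \in F_B$, so that $y_2 = \varepsilon$) resume $A$ directly into $Q \times \{\heartsuit\}$. From $\overline{Q} \times P$ one continues $B$, and whenever the $B$-component is final one may likewise resume $A$ and pass to $Q \times \{\heartsuit\}$, where only $A$ is simulated. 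Requiring phase~2 to be entered by consuming a symbol enforces $y_1 \neq \varepsilon$. For the degenerate endings I would let $C$ accept in $Q \times \{\heartsuit\}$ when $A$ is final, and also in a state $(q, p)$ or $(\overline{q}, p)$ with $q \in F_A$ and $p \in F_B$ (these cover $x_2 = \varepsilon$ with $y_2$ empty or nonempty, respectively). Both inclusions $L_1 \stackrel{\rm pgi}{\leftarrow} L_2 \subseteq L(C)$ and $L(C) \subseteq L_1 \stackrel{\rm pgi}{\leftarrow} L_2$ then follow exactly as in Lemma~\ref{latta20}, since the shape of the state set forces every accepting computation to traverse the four phases in order.

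For suffix-guided insertion I would avoid a second construction and instead use closure of the regular languages under reversal together with the identity $L_1 \stackrel{\rm sgi}{\leftarrow} L_2 = \left( L_1^R \stackrel{\rm pgi}{\leftarrow} L_2^R \right)^R$, which holds because reversing the strings $x_1 y_1 y_2 x_2$, $x = x_1 y_2 x_2$ and $y = y_1 y_2$ turns the defining conditions of $\stackrel{\rm sgi}{\leftarrow}$ into those of $\stackrel{\rm pgi}{\leftarrow}$ (the matched suffix $y_2$ becomes a matched prefix $y_2^R$, and $y_2 \neq \varepsilon$ becomes $y_2^R \neq \varepsilon$). I expect the only real work to be bookkeeping rather than any genuine difficulty: the main thing to get right is the handling of the degenerate blocks $y_2 = \varepsilon$ (phase~3 is skipped) and $x_1 = \varepsilon$ or $x_2 = \varepsilon$ (phases~1 or 4 are empty), which is exactly where a careless choice of transitions or final states would either let $C$ accept strings with $y_1 = \varepsilon$ or make it miss legitimate results.
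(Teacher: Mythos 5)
Your proposal is correct and is essentially the proof the paper has in mind: the paper explicitly leaves this as an exercise, describing it as a simplified variant of the construction in Lemma~\ref{latta20}, and your four-phase NFA (dropping the $Q \times \overline{P}$ re-matching block) is exactly that simplification, with the degenerate cases $y_2 = \varepsilon$, $x_1 = \varepsilon$, $x_2 = \varepsilon$ handled correctly by the choice of transitions and final states. Your use of the identity $L_1 \stackrel{\rm sgi}{\leftarrow} L_2 = \bigl( L_1^R \stackrel{\rm pgi}{\leftarrow} L_2^R \bigr)^R$ together with closure under reversal is a valid (and slightly more economical) way to get the suffix-guided case, matching the symmetry argument the paper itself invokes elsewhere for suffix-guided insertion.
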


Iterated prefix-guided or suffix-guided insertion does
not preserve regularity.

\begin{proposition}
\label{patta224}
There exist singleton languages $L_1$ and $L_2$ such that
$\mathbb{PGI}^*(L_1)$ and $\mathbb{SGI}^*(L_2)$ are 
non-regular.
\end{proposition}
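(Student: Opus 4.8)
\medskip\noindent\textbf{Proof proposal.} The plan is to produce one explicit singleton for the prefix-guided case and then obtain the suffix-guided case for free by reversal. A direct check of Definition~\ref{prefsuf} shows that reversing all strings interchanges the two operations: $x^R \stackrel{\rm pgi}{\leftarrow} y^R = (x \stackrel{\rm sgi}{\leftarrow} y)^R$. Because the closures of Definition~\ref{datta2} are built by the same inductive pattern (inserting the $i$th stage into itself), this conjugacy lifts to $\mathbb{SGI}^*(L) = (\mathbb{PGI}^*(L^R))^R$ for every language $L$. Since regularity is preserved under reversal, it suffices to exhibit a singleton whose \emph{prefix}-guided closure is non-regular. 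I would take $L_1 = \{aab\}$ and then set $L_2 = \{baa\} = L_1^R$, so that $\mathbb{SGI}^*(L_2) = (\mathbb{PGI}^*(L_1))^R$ is non-regular exactly when $\mathbb{PGI}^*(L_1)$ is. This is the prefix-guided analogue of Example~\ref{etavallinen}; note that the obvious candidate $\{ab\}$ fails, since its prefix-guided closure is just $ab^+$ (every inserted suffix adds only $b$'s), which is why a second $a$ is needed.

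For $L_1=\{aab\}$ I would prove non-regularity by a pumping argument, which isolates exactly what must be shown. The easy half is a lower bound: writing $|w|_a,|w|_b$ for the numbers of $a$'s and $b$'s in $w$, two derivation steps suffice. Matching the prefix $aa$ of the source $aab$ at the last two letters of the $a$-block and inserting the remaining $b$ sends $a^m b^n$ to $a^m b^{n+1}$; matching the one-letter prefix $a$ at the last letter of the $a$-block and inserting the suffix $ab$ sends $a^m b^n$ to $a^{m+1} b^{n+1}$. Since $aab$ remains available as a source at every stage, starting from $a^2b^1$ these moves reach all $a^mb^n$ with $m\ge 2$ and $m\le n+1$; in particular
$$\{\, a^n b^n \mid n \ge 2 \,\} \subseteq \mathbb{PGI}^*(\{aab\}).$$
Now assume $\mathbb{PGI}^*(\{aab\})$ is regular with pumping length $N\ge 2$. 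Then $a^N b^N$ lies in the closure, and pumping inside its $a$-block yields $a^{N+2j}b^N$ (for a suitable $j\ge 1$) in the closure as well. This contradicts the upper bound below, so the closure is non-regular.

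The main obstacle is therefore the upper bound: proving the invariant
$$|w|_a - |w|_b \le 1 \qquad \text{for every } w \in \mathbb{PGI}^*(\{aab\}),$$
since $a^{N+2j}b^N$ has imbalance $2j \ge 2$. A naive induction on the number of insertion steps does not close: one insertion changes the imbalance by $|y_2|_a-|y_2|_b$, where $y_2$ is the suffix of a source $y$ determined by the matched prefix $y_1$; when $y_1$ happens to be balanced (for instance $y_1=aabb$, whose imbalance is $0$) the crude estimate only gives $\le 2$. The fix I would pursue is to strengthen the induction hypothesis so that it simultaneously controls the imbalances of the \emph{prefixes} of closure strings, recording that once a prefix has dipped to imbalance $0$ the whole string cannot recover to total imbalance $1$; the low-level computation (the property holds through $\mathbb{PGI}^{(2)}(\{aab\})$, and the strings needed to break it, such as $aabbaab$, fail to appear) should pinpoint the correct strengthened statement. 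Along the way I would use two auxiliary invariants, each a one-line induction on insertion steps and both feeding into the final claim: every string of the closure begins with $aa$ (forcing $|w|_a \ge 2$ on the $a^*b^*$ part) and ends with $b$ (forcing $|w|_b \ge 1$). Together with the imbalance invariant these give the clean characterization $\mathbb{PGI}^*(\{aab\}) \cap a^* b^* = \{\, a^m b^n \mid 2 \le m \le n+1 \,\}$, should one prefer to argue through intersection with the regular language $a^*b^*$ rather than by pumping.
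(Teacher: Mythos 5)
Your choice of $L_1=\{aab\}$, the reversal duality for the suffix-guided case, and the lower bound $\{a^nb^n \mid n\ge 2\}\subseteq\mathbb{PGI}^*(\{aab\})$ all match the paper (the paper takes $L_2=\{abb\}$ with a symmetric argument rather than $\{baa\}$ with reversal, an immaterial difference). However, there is a genuine gap exactly where you yourself locate the main obstacle: the upper bound is never proved. You state the invariant $|w|_a-|w|_b\le 1$ for all $w\in\mathbb{PGI}^*(\{aab\})$, observe correctly that the naive induction does not close (a source $y=y_1y_2$ whose matched prefix $y_1$ is balanced could add imbalance $+1$ to a target already at imbalance $1$), and then defer the repair to an unspecified ``strengthened statement'' to be pinpointed by computation. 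That strengthened statement is the entire mathematical content of the upper bound, so as written the proposal is a plan, not a proof. For the record, your strategy can be completed: the hypothesis that propagates is that every non-empty prefix $p$ of a closure string $w$ satisfies $|p|_a-|p|_b\ge |w|_a-|w|_b$; in the insertion $w=x_1y_1y_2x_2$ the three kinds of prefixes ($p$ a prefix of $x_1$; $p=x_1q$ with $q$ a prefix of $y$; $p=x_1y_1y_2r$ with $r$ a prefix of $x_2$) are all handled using that $x_1y_1$ and $x_1y_1r$ are prefixes of the target $x$, and the bound $|w|_a-|w|_b\le 1$ then follows since insertions never increase imbalance. None of this appears in your proposal.

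The paper sidesteps the difficulty entirely, and the idea it uses is precisely what is missing from your write-up: intersect with $a^*b^*$ \emph{before} doing any induction. Since the result $x_1y_1y_2x_2$ of a prefix-guided insertion contains the target $x=x_1y_1x_2$ as a scattered subsequence and the source $y=y_1y_2$ as a substring, a result lying in $a^*b^*$ forces both arguments to lie in $a^*b^*$. Hence one never needs to control arbitrary closure strings at all, only insertions of $a^ib^j$ into $a^mb^n$, where the counting is one line: any non-empty matched prefix of $a^ib^j$ (with $2\le i\le j+1$) contains at least one $a$, so the inserted suffix carries at least as many $b$'s as $a$'s. This yields $\mathbb{PGI}^*(\{aab\})\cap a^*b^*=\{a^ib^j\mid 2\le i\le j+1\}$ directly, which is non-regular. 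Your global imbalance invariant is true but far stronger than what is needed; restricting to $a^*b^*$ first is what makes the induction trivial.
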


\begin{proof}
Choose $L_1 = aab$.
We claim that
\begin{equation}
\label{eetta223}
\mathbb{PGI}^*(aab) \cap a^* b^* =
\{ a^i b^j \mid 2 \leq i \leq j+ 1 \}.
\end{equation}
To establish the inclusion from right to left we note
that, for all $i \geq 2$, $a^{i+1} b^{i+2} \in
a^i b^{i+1} \stackrel{\rm pgi}{\leftarrow} aab$.
Furthermore, into strings of the form $a^i b^j$ ($i \geq 2$)
one can always
add exactly one $b$ by inserting $aab$ where prefix $aa$ is
 matched with the last $a$'s of $a^i b^j$.

Second we verify the inclusion from left to right in~(\ref{eetta223}).
Denote $Z = \{ a^i b^j \mid 2 \leq i \leq j+ 1 \}$.
We note that derivations of strings in
$\mathbb{PGI}^*(aab)$ belonging to $a^* b^*$ can use only strings
in $a^* b^*$ because if $w_1 \not\in a^*b^*$ or 
$w_2 \not\in a^* b^*$ then any string in
$w_1 \stackrel{\rm pgi}{\leftarrow} w_2$ has an occurrence
of $b$ preceding an occurence of $a$. It is clear that
for $u_1, u_2 \in Z$, all strings in
$(u_1 \stackrel{\rm pgi}{\leftarrow} u_2) \cap a^* b^*$ 
must be in $Z$. Note that 
 the matched prefix of $u_2$ must contain at least 
one $a$ and thus the insertion adds to $u_1$
at least as many $b$'s as $a$'s.

Since the language $Z$ is non-regular, (\ref{eetta223}) implies
that $\mathbb{PGI}^*(aab)$ is non-regular.

A completely symmetric argument establishes that
$\mathbb{SGI}^*(abb)$ is non-regular.
\prend
\end{proof}

It seems difficult to extend the proof of Lemma~\ref{latta20}
for outfix-guided insertion closure because on strings with
iterated insertions, the computations on corresponding prefix-suffix
pairs can, in general, depend on each other and when processing
a part inserted in between, an NFA would need to keep track
of such pairs, as opposed to simply keep track of a set
of states. On the other hand,  constructions as in the
proof of Proposition~\ref{patta224} rely on the property that the
matched substrings are all either prefixes or all suffixes of the
inserted strings and this type of straightforward
constructions do not yield a regular language whose outfix-guided insertion
closure is non-regular.

Next we show that regular languages, indeed, are not closed under
iterated outfix-guided insertion. For the construction we use 
the following technical
lemma.

\begin{lemma}
\label{latta16}
Let $\Sigma = \{ a_1, a_2, a_3, b_1, b_2, b_3 \}$ and define
$$
L_1 =
\{ a_3 a_1 a_2 b_1, \;\;
a_2 b_2 b_1 b_3, \;\;
a_1 a_2 a_3 b_2, \;\;
a_3 b_3 b_2 b_1, \;\;
a_2 a_3 a_1 b_3, \;\;
a_1 b_1 b_3 b_2
\}.
$$
Then $L_1 \leftarrow L_1 = L_1$.
\end{lemma}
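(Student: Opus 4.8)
The plan is to prove the two inclusions separately, with the reverse inclusion carrying all the content. For $L_1 \subseteq L_1 \leftarrow L_1$, I would just invoke the trivial derivation step: every string $w$ of length at least two satisfies $w \in w \leftarrow w$ (write $w = uv$ with $u,v \neq \varepsilon$ and take $z = \varepsilon$ in Definition~\ref{maindefin}), and all six words of $L_1$ have length $4$, so each is reproduced from itself. The crux is $L_1 \leftarrow L_1 \subseteq L_1$. Here I would first make a length observation: any $w \in L_1 \leftarrow L_1$ has the form $w = x_1 u z v x_2$ with $x = x_1 u v x_2 \in L_1$ and $y = uzv \in L_1$, so $|w| = |x| + |z| = 4 + |z|$. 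Since every string of $L_1$ has length exactly $4$, membership $w \in L_1$ forces $z = \varepsilon$; and when $z = \varepsilon$ we have $|uv| = |y| = 4 = |x|$, so $uv = y$ being a factor of $x$ forces $y = x$, $x_1 = x_2 = \varepsilon$, and $w = x \in L_1$. Thus the entire statement reduces to showing that \emph{no} insertion with $z \neq \varepsilon$ is ever admissible.

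Next I would reduce this to a finite factor check. Fix $y = c_1 c_2 c_3 c_4 \in L_1$ and enumerate the factorizations $y = uzv$ with $u, v, z$ all nonempty: since $|u| + |z| + |v| = 4$, the only length profiles are $(|u|,|z|,|v|) \in \{(1,1,2),(2,1,1),(1,2,1)\}$, yielding the three ``gapped'' strings $uv \in \{ c_1 c_3 c_4,\ c_1 c_2 c_4,\ c_1 c_4 \}$, i.e.\ the words obtained from $y$ by deleting one or two interior letters while keeping the first letter $c_1$ and the last letter $c_4$. An insertion with $z \neq \varepsilon$ is admissible precisely when one such $uv$ occurs as a \emph{contiguous factor} of some $x \in L_1$ (so that $x$ can be split as $x_1 u v x_2$). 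Hence it suffices to verify that, ranging over all $y \in L_1$, none of the gapped strings $c_1 c_3 c_4$, $c_1 c_2 c_4$ (length $3$) or $c_1 c_4$ (length $2$) is a factor of any of the six words of $L_1$.

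The main obstacle is this finite verification, which is routine but must be done carefully. I would organize it using the block structure: the three words $a_3a_1a_2b_1$, $a_1a_2a_3b_2$, $a_2a_3a_1b_3$ have shape $aaab$ and the other three have shape $abbb$, so every gapped string has shape $aab$, $abb$, or $ab$, and one compares only against factors of the same shape. The cyclic index pattern makes the comparison transparent: for instance the length-$2$ gapped strings are exactly $\{a_3b_1, a_1b_2, a_2b_3\}$, whose indices satisfy $j \equiv i+1 \pmod 3$, whereas the length-$2$ factors $a_i b_j$ actually present in $L_1$ all satisfy $j \equiv i$ or $j \equiv i-1 \pmod 3$, so the two sets are disjoint; the length-$3$ gapped strings can be checked against the (few) length-$3$ factors in the same spirit. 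Once every gapped string is confirmed to be a non-factor, no length-increasing insertion exists, which combined with the first paragraph gives $L_1 \leftarrow L_1 = L_1$.
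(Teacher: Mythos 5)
Your proof is correct, but it is organized genuinely differently from the paper's. The paper treats $x=y$ separately (each word has four distinct letters, so self-insertion is necessarily trivial) and, for $x\neq y$, argues by symbol type: the matched parts cannot both consist of $a$-symbols nor both of $b$-symbols, leaving the case where $u$ is all $a$'s and $v$ is all $b$'s; cyclic symmetry then reduces to $v$ beginning with $b_1$, so only two hosts ($a_3a_1a_2b_1$ and $a_1b_1b_3b_2$) need be examined, and for each the unique candidate $y$ is ruled out because one of its letters would protrude outside $x$. You instead use the length identity $|w| = 4 + |z|$ to reduce the whole lemma to a single uniform finite claim --- none of the ``gapped strings'' $c_1c_3c_4$, $c_1c_2c_4$, $c_1c_4$ of a word of $L_1$ is a factor of a word of $L_1$ --- and verify it via the shape split and the index patterns modulo $3$. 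Your route buys two things: the enumeration by length profiles $(1,1,2)$, $(2,1,1)$, $(1,2,1)$ is automatically exhaustive, and in particular it covers factorizations in which $u$ or $v$ straddles the unique $a$-to-$b$ boundary of $y$ (e.g.\ $u=a_2b_2$, $z=b_1$, $v=b_3$ for $y=a_2b_2b_1b_3$), cases which are not literally among the three the paper's symbol-type analysis lists (they are easy to exclude, but your scheme needs no extra argument for them); it also dispenses with a separate $x=y$ case. The paper's route buys brevity, since symmetry collapses everything to two hosts and one candidate insertion each. The one soft spot in your write-up is that the length-$3$ gapped strings are left to a check ``in the same spirit''; to close it concretely, note that every consecutive $a$-pair occurring in a word of $L_1$ is ascending ($a_ka_{k+1}$ with indices modulo $3$) and every factor $a_ma_{m+1}b_j$ of a word of $L_1$ has $j=m$, whereas your gapped $aab$-strings have either a descending $a$-pair or $j=m+1$; dually, the $abb$-factors present are exactly the $a_mb_mb_{m-1}$, while the gapped $abb$-strings are $a_mb_{m-1}b_{m+1}$ and $a_mb_mb_{m+1}$. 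With that observation the verification closes and your argument establishes $L_1 \leftarrow L_1 = L_1$.
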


\begin{proof}
The inclusion from right to left follows from the observation
that any string $w$ of length at least two is a non-trivial
outfix of itself and, consequently $w$ can be inserted into itself
 to produce $w$
as a result.

For the converse inclusion we verify that for all $x, y \in L_1$, 
if $y \neq x$, then $y$ cannot be outfix-guided inserted into $x$
and $x$ can be outfix-guided inserted into itself only in the trivial
way of using as matching parts a non-empty prefix $x_1$
and a non-empty suffix $x_2$ such that $x_1 x_2 = x$. 
The second claim is obvious
because each string of $L_1$ consists of 4 different symbols.

Consider now $x, y \in L_1$, $x \neq y$. 
For the sake of contradiction suppose that
$w \in x \leftarrow y$ and that $w$ is obtained by matching
substrings $u$ and $v$ of $x$ with a prefix and a suffix of $y$,
respectively. The substrings $u$ and $v$ cannot both consist
of symbols $a_i$, $1 \leq i \leq 3$, because if $x$ and $y$ both
contain more than one symbol $a_i$, they must end with symbols
$b_{j_1}$ and $b_{j_2}$, $j_1 \neq j_2$. Using a symmetric argument
we observe that $u$ and $v$ cannot both consist of symbols
$b_i$, $1 \leq i \leq 3$.

The remaining possibility is that the first matched part $u$ consists
of (one or more)  symbols $a_i$ and the second matched part $v$ consists of
(one or more) symbol $b_j$. For simplicity in the following discussion
we assume that $v$ begins with $b_1$. The definition of $L_1$ is symmetric,
and an analogous argument works when the first symbol of $v$ is
$b_2$ or $b_3$.

Now $uv$ must be a substring of $x$. This means that the last symbol
of $u$ can be $a_2$ if $x = a_3 a_1 a_2 b_1$ or $a_1$ if
$x = a_1 b_1 b_3 b_2$. Besides the string $ a_3 a_1 a_2 b_1$
the only other string of $L_1$ where $a_2$ occurs before $b_1$
is $a_2 b_2 b_1 b_3$. The string  $a_2 b_2 b_1 b_3$
cannot be inserted into  $x =  a_3 a_1 a_2 b_1$ because the last
symbol $b_3$ would be ``outside'' of $x$.

As the remaining case consider then the possibility
$x = a_1 b_1 b_3 b_2$. The only string of $L_1 - \{ x \}$ where
$a_1$ occurs before $b_1$ is $a_3 a_1 a_2 b_1$ and again this cannot
be inserted into $x$ because the first symbol $a_3$ would be
outside of $x$.
\prend
\end{proof}

\begin{theorem}
\label{tatta1}
There exists a finite language $L$ such that
$\mathbb{OGI}^*(L)$ is non-regular.
\end{theorem}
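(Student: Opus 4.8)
The plan is to build a concrete finite language $L$, to choose a suitable regular ``filter'' language $R$, and to prove that $\mathbb{OGI}^*(L) \cap R$ is non-regular by identifying it with a counting language on which the pumping lemma can be applied. I would take $L = L_1 \cup M$, where $L_1$ is the rigid language of Lemma~\ref{latta16} and $M$ consists of one or two short ``marker'' strings over an enlarged alphabet containing a fresh symbol (written $\$$) that occurs exactly once in each string of $M$. The reason for reusing $L_1$ is precisely its rigidity $L_1 \leftarrow L_1 = L_1$: the scaffold blocks cannot be outfix-guided inserted into one another, so $L_1$ by itself has $\mathbb{OGI}^*(L_1) = L_1$, a finite and hence regular closure, and the only source of genuinely new strings is the interaction of $M$ with the scaffold. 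This is exactly the feature that the naive two-letter counter-examples for ordinary or for one-sided insertion (as in Proposition~\ref{patta224}) lack, and it is what makes the \emph{two-sided} closure controllable.

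For the generation direction I would exhibit, for every $n$, an explicit derivation in the sense of Definition~\ref{datta2} witnessing that a target string $t_n \in R$ lies in $\mathbb{OGI}^{(i_n)}(L)$. Each step is an outfix-guided insertion whose source is a scaffold block (or a string already produced), whose non-trivial matched outfix $(u,v)$ is located on a substring straddling the unique marker, and whose effect is to lengthen the current string on \emph{both} sides of the marker simultaneously. The crucial point is that Definition~\ref{maindefin} forces both matched parts $u$ and $v$ to be non-empty; this couples the growth of the left and right parts and forces the matched symbol counts appearing in $t_n$. Alternating between the scaffold blocks (serving as the two ``phases'' of the iteration) then produces an infinite balanced family $\{t_n \mid n \geq 0\}$ of the form of a non-regular counting language.

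The containment direction is the main obstacle and is where Lemma~\ref{latta16} does its real work. I would establish an invariant describing exactly which strings of each stage $\mathbb{OGI}^{(i)}(L)$ can belong to $R$, and then argue that every admissible outfix-guided insertion either leaves $R$ or advances one step along $\{t_n\}$. The difficulty, compared with the one-sided closures, is that the stages $\mathbb{OGI}^{(i)}(L)$ keep growing and \emph{every} newly produced string becomes a candidate source for later insertions, so one cannot argue purely locally as in Proposition~\ref{patta224}. In particular I must rule out that some string assembled at an earlier stage supplies an ``unintended'' outfix which, when inserted into another reachable string, lands back inside $R$ but off the target family. Here the distinct-symbol, cyclic structure of $L_1$ restricts the admissible matched outfixes so tightly that a finite case analysis---organized by the block boundaries an outfix can straddle and by the position of the single marker---shows no such spurious insertion can occur. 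This is also the point at which the unrestricted closure genuinely exceeds the one-sided closures, matching the remark preceding the construction.

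Finally I would conclude as follows. Since regular languages are closed under intersection, if $\mathbb{OGI}^*(L)$ were regular then so would be $\mathbb{OGI}^*(L) \cap R$. But the generation and containment directions together give $\mathbb{OGI}^*(L) \cap R = \{t_n \mid n \geq 0\}$, which is non-regular by the pumping lemma; hence $\mathbb{OGI}^*(L)$ is non-regular, proving the theorem.
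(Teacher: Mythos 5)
Your proposal has the same skeleton as the paper's proof --- reuse the rigid language $L_1$ of Lemma~\ref{latta16}, adjoin marker strings containing a fresh symbol $\$$, prove a generation and a containment direction, and conclude non-regularity --- but the two design decisions on which everything hinges are wrong, and each by itself makes the argument fail. First, the growth mechanism you describe is impossible under Definition~\ref{maindefin}: an outfix-guided insertion rewrites $x = x_1uvx_2$ into $x_1uzvx_2$, adding the \emph{single contiguous block} $z$ at \emph{one} position, so no individual step can ``lengthen the current string on both sides of the marker simultaneously''. Worse, if the source of a step is a scaffold block, i.e.\ an element of $L_1$, it contains no $\$$, so its matched outfix $uv$ is $\$$-free; since $uv$ must occur as a contiguous substring of the host, it can never be ``located on a substring straddling the unique marker''. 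With one $\$$ in the interior of the marker, your intended generation steps do not exist. The coupling of the two sides has to come from elsewhere: the paper places the growth region \emph{between} two end markers and lets the six blocks of $L_1$ act as a cyclic phase (the set $S_{\rm middle}$ in its proof), so that each insertion adds one symbol to one side only, adjacency constraints force the sides to alternate, and balance emerges over a full cycle of six insertions rather than within one step.

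Second, requiring $\$$ to occur \emph{exactly once} in each string of $M$ destroys the containment argument. The paper's marker $\$a_3a_1b_1b_3\$$ carries $\$$ at \emph{both} ends precisely so that every non-trivial outfix of any marker-derived string contains two $\$$'s; hence such strings can never be inserted into $\$$-free strings, and can be inserted into one another only trivially, confining all non-trivial activity to insertions of $L_1$-blocks into the protected middle. With a single $\$$, if your marker has a non-empty $\$$-free prefix and a non-empty $\$$-free suffix, it has a $\$$-free non-trivial outfix and can therefore be inserted into $\$$-free strings. For instance, taking the natural single-$\$$ variant $a_3a_1\$b_1b_3$ of the paper's axiom, the outfix $(a_3,\,b_3)$ matches the prefix $a_3b_3$ of the scaffold block $a_3b_3b_2b_1 \in L_1$, producing the spurious string $a_3a_1\$b_1b_3b_2b_1$; this string re-enters the closure as a new source and escapes any case analysis organized around your target family $\{t_n\}$. (Anchoring the single $\$$ at one end instead avoids this particular leak but leaves the other end unprotected, so marker-into-marker insertions are no longer forced to be trivial and the whole containment analysis must be redone.) So the outline does not yet prove the theorem; repairing it essentially forces you back to the paper's double-endmarker, cyclic-phase design, and the unspecified parts of your plan ($M$, $R$, the $t_n$, and the invariant) are exactly where that work lies.
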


\begin{proof} 
Let $\Sigma = \{ a_1, a_2, a_3, b_1, b_2, b_3  \}$ and define
$L \subseteq (\Sigma \cup \{ \$ \})^*$ as
$$
L = \{  \$ a_3 a_1 b_1 b_3 \$, \;
a_3 a_1 a_2 b_1, \;
a_2 b_2 b_1 b_3, \;
a_1 a_2 a_3 b_2, \;
a_3 b_3 b_2 b_1, \;
a_2 a_3 a_1 b_3, \;
a_1 b_1 b_3 b_2
\}.
$$
Note that $L - \{ \$ a_3 a_1 b_1 b_3 \$ \}$ is equal to the language 
$L_1$ from Lemma~\ref{latta16}.
Our construction is based on an idea that the only way to produce
new strings in $\mathbb{OGI}^*(L)$ is to insert into strings
obtained from $  \$ a_3 a_1 b_1 b_3 \$ $ cyclically copies of the
strings of $L_1$. 
For ease of discussion we introduce 
names for the strings  of $L_1$:
\begin{eqnarray*}
y_1 = a_3 a_1 a_2 b_1, \;\;
y_2 = a_2 b_2 b_1 b_3, \;\;
y_3 = a_1 a_2 a_3 b_2, \;\;
y_4 = a_3 b_3 b_2 b_1, \\
y_5 = a_2 a_3 a_1 b_3, \;\;
y_6 = a_1 b_1 b_3 b_2.
\end{eqnarray*}
For specifying the language $\mathbb{OGI}^*(L)$ we define
the finite set 
$$
S_{\rm middle} = \{   a_1 b_1, \, a_1 a_2 b_1,  
a_1 a_2 b_2 b_1, \,
a_1 a_2 a_3 b_2 b_1, \,
a_1 a_2 a_3 b_3 b_2 b_1, \,
a_1 a_2 a_3 a_1 b_3 b_2 b_1  \}. 
%a_1 a_2 a_3 a_1 b_1 b_3 b_2 b_1\}.
$$
We claim that
\begin{equation}
\label{eetta1}
\mathbb{OGI}^*(L) = 
\{ \$ a_3 (a_1 a_2 a_3)^i z (b_3 b_2 b_1)^i b_3 \$ \mid i \geq 0,\;
z \in S_{\rm middle} \}.
\end{equation}

To establish the inclusion from right to left, we note that 
\begin{eqnarray*}
\$ a_3 a_1 b_1 b_3 \$ \stackrel{[y_1]}{\Rightarrow}
\$ a_3 a_1 a_2 b_1 b_3 \$ \stackrel{[y_2]}{\Rightarrow}
\$ a_3 a_1 a_2 b_2 b_1 b_3 \$ \stackrel{[y_3]}{\Rightarrow} 
\$ a_3 a_1 a_2 a_3 b_2 b_1 b_3 \$ \stackrel{[y_4]}{\Rightarrow}  \\
\$ a_3 a_1 a_2 a_3 b_3 b_2 b_1 b_3 \$ \stackrel{[y_5]}{\Rightarrow}
\$ a_3 a_1 a_2 a_3 a_1 b_3 b_2 b_1 b_3 \$ \stackrel{[y_6]}{\Rightarrow}
\$ a_3 a_1 a_2 a_3 a_1 b_1 b_3 b_2 b_1 b_3 \$ = w_1 .
\end{eqnarray*} 
The first five insertions  generate the strings
$ \$ a_3 z b_3 \$ $, $z \in S_{\rm middle}$, and the last
string $w_1$ again has ``middle part'' $a_3 a_1 b_1 b_3$. 
By cyclically outfix-guided inserting the strings $y_1, \ldots, y_6$
into $w_1$ we get all strings
$ \$ a_3 (a_1 a_2 a_3) z (b_3 b_2 b_1) b_3 \$ $, 
$z \in S_{\rm middle}$, and the string
$ \$ a_3 (a_1 a_2 a_3)^2 a_1 b_1 (b_3 b_2 b_1)^2 b_3 \$ $. By  simple
induction it follows that $\mathbb{OGI}^*(L)$ contains the
right side of~(\ref{eetta1}).

To establish the converse inclusion, we verify that all strings
obtained by iterated outfix-guided insertion from strings of $L$
must be obtained as above, that is, all non-trivial derivations 
producing new strings must be as above. 

Since $ \$ a_3 a_1 b_1 b_3 \$ $ is the only string in $L$
containing symbols \$ and they occur as the first and the last
symbol, it is clear that all strings in $\mathbb{OGI}^*(L)$
containing symbols \$ must be in $ \$ \Sigma^* \$ $.
A string of  $ \$ \Sigma^* \$ $ cannot be outfix-guided inserted
to any string not containing symbols \$ and a string of
 $ \$ \Sigma^* \$ $ can be outfix-guided inserted into
another string of  $ \$ \Sigma^* \$ $ only using a trivial
derivation step.

By Lemma~\ref{latta16} we know that strings of $L_1$ cannot be
outfix-guided inserted into other strings of $L_1$.

We have verified that the set
$$
L_{\rm gen} = \{ \$ a_3 (a_1 a_2 a_3)^i z (b_3 b_2 b_1)^i b_3 \$
\mid i \geq 0,\;
z \in S_{\rm middle} \}
$$
is included in $\mathbb{OGI}^*(L)$ and strings of
$L_{\rm gen}$ can be inserted into strings of
$L_{\rm gen}$ only in a trivial way. To complete the proof it
remains to verify that inserting strings of $L_1$ into
$L_{\rm gen}$ does not produce additional strings, that is,
$L_{\rm gen} \leftarrow L_1 \subseteq L_{\rm gen}$.

It is impossible  to insert $ a_3 a_1 a_2 b_1$ into a string of
$L_{\rm gen}$ using an outfix obtained from $a_3$ and $a_2 b_1$
because in strings of $L_{\rm gen}$ the symbols $a_3$ and $a_2$
do not occur consecutively. The same applies to the other five
strings $y_2, \ldots, y_6 \in L_1$: we cannot insert $y_j$ into
$L_{\rm gen}$ using an outfix where the prefix ends and the
 suffix begins
 with a symbol of $\{ a_1, a_2, a_3 \}$ (respectively, with
a symbol of $\{ b_1, b_2, b_3 \}$).

The other non-trivial possibilities are that we insert
$a_3 a_1 a_2 b_1$ into a string of $L_{\rm gen}$ using
an outfix $uv$ where $u$ is either $a_3$ or $a_3 a_1$ and
$v = b_1$. The choice $u = a_3$, $v = b_1$ is not possible because
$a_3 b_1$ is not a substring of a string in $L_{\rm gen}$.
The insertion using outfix $a_3 a_1 b_1$ can be done only
to a string of the from
$ \$ a_3 (a_1 a_2 a_3)^i a_1 b_1 (b_3 b_2 b_1)^i b_3 \$ $, $i \geq 0$
and it produces
$ \$ a_3 (a_1 a_2 a_3)^i a_1 a_2 b_1 (b_3 b_2 b_1)^i b_3 \$ 
\in L_{\rm gen}$.
Using symmetry of the definition of $L_1$, the argument for the
other five strings  of $L_1$ is completely analogous.

This establishes~(\ref{eetta1}) and the non-regularity
of $\mathbb{OGI}^*(L)$.
\prend
\end{proof}

%To establish the converse inclusion, we verify
%using Lemma~\ref{latta16} that all strings
%obtained by iterated outfix-guided insertion from strings of $L$
%must be obtained as above, that is, all non-trivial derivations 
%producing new strings must be as above. 
% and the nonregularity
%of $\mathbb{OGI}^*(L)$.
%\prend
%\end{proof}

We conjecture that the iterated outfix-guided insertion closure
of a regular language need not be even context-free.
However, a construction of such a language would seem to be
considerably more complicated than the construction used
in the proof of Theorem~\ref{tatta1}.

\begin{open}
\label{oatta1}
Find a regular (or a finite) language $L$ such that $\mathbb{OGI}^*(L)$
is not context-free.
\end{open}

Contrasting the result of Theorem~\ref{tatta1} we show that unary
regular languages are closed under iterated outfix-guided
insertion. The construction is based on a
technical lemma which shows that, for unary languages, outfix-guided
insertion closure can be represented as a variant of the iterated
overlap assembly~\cite{Csuhaj2007,EnagantiIKK15}.

\begin{definition}
\label{doverlap}
{\rm 
Let $x, y \in \Sigma^*$. The {\em 2-overlap catenation\/} of $x$ and
$y$, $x \overline{\odot}^2 y$,
is defined as
$$
x \overline{\odot}^2 y = 
 \{ z \in \Sigma^+ \mid
(\exists u, w \in \Sigma^*)(\exists v \in \Sigma^{\geq 2}) \; 
x = uv, y = vw, z = uvw \}.
$$
2-overlap catenation is extended in the natural to an
operation on languages.
For $L \subseteq \Sigma^*$, we define inductively
$
2\mathbb{OC}^{(0)}(L) = L$  and 
$2\mathbb{OC}^{(i+1)}(L) =$ \\
$2\mathbb{OC}^{(i)}(L) \overline{\odot}^2
2\mathbb{OC}^{(i)}(L)$, $i \geq 0$.
The {\em 2-overlap catenation closure\/} of $L$ is
$
2\mathbb{OC}^*(L) = \bigcup_{i=0}^{\infty} 2\mathbb{OC}^{(i)}(L).
$
}
\end{definition}

Due to commutativity of unary languages we get the following
property which will be crucial for establishing closure of unary
regular languages under outfix-guided insertion closure.

\begin{lemma}
\label{latta23}
If $x, y \in  a^*$ are unary strings, then
$
x \leftarrow y = x \overline{\odot}^2 y.
$
\end{lemma}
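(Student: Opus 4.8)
The plan is to exploit the fact that over a unary alphabet a language is completely determined by its set of lengths, so both sides of the claimed identity are unary languages and it suffices to show that they have the same length set. Write $x = a^m$ and $y = a^n$. Every string produced by either $x \leftarrow y$ or $x \overline{\odot}^2 y$ is again a power of $a$, so I would reduce the whole equality to the statement that the achievable output lengths coincide.

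First I would compute the length set for $x \leftarrow y$. A derivation in $x \leftarrow y$ is determined by the matched prefix $u$ and the matched suffix $v$ of $y$, subject to $u,v \neq \varepsilon$. In the unary setting these are characterized purely by their lengths $p = |u| \geq 1$ and $s = |v| \geq 1$. The requirement that $uv$ occur as a substring of $x = a^m$ amounts to $p + s \leq m$, and the requirement that $(u,v)$ be a non-trivial outfix of $y = a^n$ (i.e. $y = uzv$ with $|z|\geq 0$) amounts to $p + s \leq n$. The resulting string $x_1 u z v x_2$ has length $m + |z| = m + n - (p+s)$. Setting $k = p + s$, any integer $k$ with $2 \leq k \leq \min(m,n)$ is attainable (split $k$ as $1 + (k-1)$), and no other value is. Hence the length set of $x \leftarrow y$ is exactly
$$
\{\, m + n - k \mid 2 \leq k \leq \min(m,n) \,\}.
$$

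Next I would do the analogous computation for $x \overline{\odot}^2 y$. Here a product is determined by the overlap $v \in \Sigma^{\geq 2}$ with $x = uv$ and $y = vw$; in the unary case this is governed by $t = |v|$, subject to $t \geq 2$, $t \leq m$ (so that $u$ exists), and $t \leq n$ (so that $w$ exists). The output $uvw$ has length $(m-t) + t + (n-t) = m + n - t$, so the length set of $x \overline{\odot}^2 y$ is $\{\, m + n - t \mid 2 \leq t \leq \min(m,n) \,\}$, which is identical to the set obtained above under the correspondence $k \leftrightarrow t$. In both cases the set is empty precisely when $\min(m,n) \leq 1$, so the two operations agree there as well, and the lemma follows.

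I do not expect any deep obstacle: the argument is essentially length bookkeeping enabled by commutativity of the unary alphabet. The one point requiring care is the change of viewpoint between the two descriptions — outfix-guided insertion is naturally parametrized by two quantities (the matched prefix length and suffix length), whereas 2-overlap catenation is parametrized by a single overlap length — and the key observation is that only the sum $p+s$ of the two matched lengths affects the output length, matching the single overlap parameter $t$. Verifying that the small-length boundary cases ($m$ or $n$ at most $1$, and the lower bound $k,t \geq 2$) line up on both sides is the only place where an off-by-one slip could occur, so I would state those constraints explicitly.
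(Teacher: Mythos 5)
Your proof is correct. Since both $x \leftarrow y$ and $x \overline{\odot}^2 y$ are subsets of $a^*$, reducing equality of the two languages to equality of their length sets is valid, and your two computations check out: an outfix-guided insertion is governed only by the sum $k=p+s$ of the matched lengths, every $k$ with $2 \leq k \leq \min(m,n)$ is attainable (take $p=1$, $s=k-1$), and the output length is $m+n-k$; a $2$-overlap catenation is governed by the single overlap length $t$ with exactly the same range and the same output length, and both sides are empty precisely when $\min(m,n)\leq 1$. The paper reaches the same conclusion by a different, witness-based route: it takes a decomposition certifying membership in one operation and rearranges its factors, using commutativity of unary concatenation, into a decomposition certifying membership in the other. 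Concretely, $w=x_1uzvx_2$ with $x=x_1uvx_2$, $y=uzv$ is rewritten as $w=(x_1x_2)(uv)z$ with $x=(x_1x_2)(uv)$, $y=(uv)z$ and $|uv|\geq 2$; conversely, an overlap $v$ with $|v|\geq 2$ is split as $v=v_1v_2$, $v_1,v_2 \neq \varepsilon$, so that $(v_1,v_2)$ serves as the matched non-trivial outfix. Both arguments rest on the same fact --- unary concatenation is commutative, equivalently a unary string is determined by its length --- so neither is deeper; the paper's version is shorter and never needs to name the length sets, while yours parametrizes and enumerates, which has the small bonus of yielding an explicit description of the common result, namely $\{\, a^{m+n-k} \mid 2 \leq k \leq \min(m,n) \,\}$, and of making the boundary cases explicit.
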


\begin{proof}
Consider $w \in (x \leftarrow y)$,
that is, we can write $w = x_1 u z v x_2$, where
$x = x_1 u v x_2$, $y = u z v$ and $u, v \neq \varepsilon$.
Since concatenation of unary strings is commutative, we have
$$
w = x_1 x_2 u v z, \mbox{ where } x = x_1 x_2 uv , \;\; y = uvz.
$$
This establishes that $w \in x \overline{\odot}^2 y$. 

Conversely, 
consider  $z \in x \overline{\odot}^2 y$, that is, $z = uvw$ where
$x = uv$, $y = vw$ and $| v | \geq 2$.
Write $v = v_1 v_2$ where $v_1, v_2 \in \Sigma^+$.
Now, again relying just on commutativity of
unary concatenation, $z = u v_1 w  v_2 \in x \leftarrow y$.
\prend
\end{proof}

\begin{corollary}
\label{catta24}
If $L$ is a unary language then
$\mathbb{OGI}^*(L) = 2\mathbb{OC}^*(L)$.
\end{corollary}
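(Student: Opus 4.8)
The plan is to prove the stronger stage-by-stage identity $\mathbb{OGI}^{(i)}(L) = 2\mathbb{OC}^{(i)}(L)$ for every $i \geq 0$ by induction on $i$, and then take the union over all $i$. The two closures are built by formally identical recursions---$\mathbb{OGI}^{(i+1)}(L) = \mathbb{OGI}^{(i)}(L) \leftarrow \mathbb{OGI}^{(i)}(L)$ versus $2\mathbb{OC}^{(i+1)}(L) = 2\mathbb{OC}^{(i)}(L) \,\overline{\odot}^2\, 2\mathbb{OC}^{(i)}(L)$---so the only thing that needs checking is that the two generating operations $\leftarrow$ and $\overline{\odot}^2$ agree on the languages that arise, and this is exactly what Lemma~\ref{latta23} supplies for unary strings.

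First I would record the base case: by definition $\mathbb{OGI}^{(0)}(L) = L = 2\mathbb{OC}^{(0)}(L)$. For the inductive step, suppose $\mathbb{OGI}^{(i)}(L) = 2\mathbb{OC}^{(i)}(L)$ and call this common language $M_i$. I would first observe that $M_i \subseteq a^*$, i.e.\ every string produced at stage $i$ is again unary; this holds because $L \subseteq a^*$ and both $\leftarrow$ and $\overline{\odot}^2$ send pairs of unary strings to unary strings (each merely rearranges and combines copies of the single letter~$a$). Granting this, I would compute, applying Lemma~\ref{latta23} pointwise to each pair $x, y \in M_i$,
\[
\mathbb{OGI}^{(i+1)}(L) = M_i \leftarrow M_i = \bigcup_{x, y \in M_i} (x \leftarrow y) = \bigcup_{x, y \in M_i} (x \,\overline{\odot}^2\, y) = M_i \,\overline{\odot}^2\, M_i = 2\mathbb{OC}^{(i+1)}(L),
\]
which closes the induction. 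Taking the union over all $i \geq 0$ then yields $\mathbb{OGI}^*(L) = 2\mathbb{OC}^*(L)$.

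There is no substantial obstacle here: the argument is a routine transfer of a pointwise identity through two matched inductive definitions. The only point that requires a word of justification---and the closest thing to a subtlety---is the closure-under-unariness claim $M_i \subseteq a^*$, since Lemma~\ref{latta23} is stated only for unary arguments and must remain applicable at every stage. But this is immediate from the forms of the two operations, so everything beyond it is bookkeeping.
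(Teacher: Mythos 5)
Your proof is correct and is exactly the argument the paper intends: the paper derives Corollary~\ref{catta24} directly from Lemma~\ref{latta23}, leaving implicit the routine induction (stage-by-stage equality of $\mathbb{OGI}^{(i)}(L)$ and $2\mathbb{OC}^{(i)}(L)$, using that unariness is preserved at every stage) that you spell out. No discrepancy in approach, and your attention to the point that Lemma~\ref{latta23} must remain applicable at each stage is a worthwhile, if minor, addition.
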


The 2-overlap closure of a regular language is always
regular. The construction does not depend on a language being
unary, so we state the result for regular languages over an
arbitrary alphabet. 
Csuhaj-Varju et al.~\cite{Csuhaj2007} have shown that
iterated overlap 
 assembly preserves regularity. 
The proof of Lemma~\ref{latta25}
is inspired by Theorem~4 of~\cite{Csuhaj2007} but  does not  follow
from it  because
\cite{Csuhaj2007}  defines iteration of operations as right one-sided 
iteration and,
furthermore, 2-overlap catenation  has an additional
length restriction on the overlapping strings.

\begin{lemma}
\label{latta25}
The 2-overlap catenation closure of a regular language is regular.
\end{lemma}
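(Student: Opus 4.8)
The plan is to show that the 2-overlap catenation closure of a regular language $L$ is regular by constructing a finite automaton that recognizes $2\mathbb{OC}^*(L)$ directly. Let $A = (\Sigma, Q, \delta, q_0, F)$ be a DFA (or NFA) for $L$. The key structural observation is that a string $z$ belongs to $2\mathbb{OC}^*(L)$ if and only if it can be decomposed as a catenation of ``blocks'' coming from $L$, where consecutive blocks overlap in a shared factor of length at least two. More precisely, I would prove that $z \in 2\mathbb{OC}^*(L)$ iff there is a factorization witnessing a chain $w_1, w_2, \ldots, w_k \in L$ together with overlap factors $v_1, \ldots, v_{k-1}$, each of length $\geq 2$, such that $w_j$ ends with $v_j$ and $w_{j+1}$ begins with $v_j$, and $z$ is the ``merged'' catenation. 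The first step is therefore to reformulate the iterated, tree-shaped definition of $2\mathbb{OC}^*$ into this flat, chain-like characterization; since 2-overlap catenation is associative up to the closure (the closure collects all strings reachable by repeated overlapping), this reformulation is the conceptual heart of the argument.

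Given the chain characterization, I would build an NFA $C$ that scans $z$ left to right and simulates running $A$ on each block, but with a twist to handle the length-$\geq 2$ overlaps. The machine guesses where each block begins and ends. At the boundary between block $w_j$ and block $w_{j+1}$, the shared factor $v_j$ (of length at least two) must be simultaneously a suffix of $w_j$ and a prefix of $w_{j+1}$. To verify this with finite memory, $C$ runs \emph{two} copies of $A$ in parallel while reading the overlap region: one copy continues the computation of $A$ on $w_j$ through to an accepting state, and a second copy begins a fresh computation of $A$ on $w_{j+1}$ starting at $q_0$. The state set of $C$ is thus built from $Q$ together with product states $Q \times Q$ used only inside an overlap region, plus a bounded counter (capped at two) to enforce $|v_j| \geq 2$. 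Once the overlap ends, $C$ discards the completed-block copy and continues with the single fresh computation. This is essentially the same bookkeeping idea used in the proof of Lemma~\ref{latta20}, adapted to the overlap setting and augmented with the small length counter.

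The main obstacle, and the step requiring the most care, is the length restriction $|v| \geq 2$ that distinguishes 2-overlap catenation from ordinary iterated overlap assembly; this is precisely why the result does not follow directly from Theorem~4 of~\cite{Csuhaj2007}. I would handle it by threading a two-valued counter through the product-state phase, so that a boundary is only ``committed'' after at least two symbols have been matched by both copies of $A$ simultaneously. Verifying correctness then splits into the two inclusions: for $L(C) \subseteq 2\mathbb{OC}^*(L)$, I read off from any accepting computation of $C$ the sequence of blocks and overlaps and invoke the chain characterization; for $2\mathbb{OC}^*(L) \subseteq L(C)$, I induct on the number of overlap operations, showing that each application of $\overline{\odot}^2$ corresponds to splicing one more block onto an accepting computation of $C$. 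The remaining details are a routine (if notationally heavy) transition-function specification analogous to the five-rule construction of Lemma~\ref{latta20}, so I would present the automaton's states and transitions explicitly and leave the straightforward verification of the two inclusions to the reader.
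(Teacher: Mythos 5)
Your overall strategy---flatten the tree-shaped definition of $2\mathbb{OC}^*(L)$ into left-to-right chains of overlapping blocks, then recognize chains with an NFA---is reasonable, but your automaton is too weak as specified, and this is a genuine gap, not a notational one. The problem is the implicit assumption that at most two blocks of a chain are ever active at the same position, so that states in $Q \cup (Q \times Q)$ plus a capped counter suffice. Flattening the iterated operation produces chains in which consecutive overlaps can themselves overlap, so arbitrarily many copies of $A$ may need to run simultaneously; your induction (``each application of $\overline{\odot}^2$ splices one more block onto an accepting computation of $C$'') breaks exactly there. One can try to repair this by passing to a chain with the fewest blocks---in a minimal chain, block $j+2$ never starts \emph{strictly} before block $j$ ends---but even then ``triple points'' survive: block $j+2$ may start at exactly the position where block $j$ ends, and then three copies of $A$ must all read that one symbol. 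Concretely, take $L = \{abcde,\, defgh,\, efghijk\}$ over distinct letters and $z = abcdefghijk$. Then $z = (abcde \mathbin{\overline{\odot}^2} defgh) \mathbin{\overline{\odot}^2} efghijk \in 2\mathbb{OC}^*(L)$, and this decomposition is forced: the first and last blocks overlap only in the single letter $e$, so the middle block cannot be dropped. While reading that $e$, your machine must simultaneously finish the copy for $abcde$, advance the copy for $defgh$, and start a copy for $efghijk$; but your transitions, as described (``once the overlap ends, $C$ discards the completed-block copy and continues with the single fresh computation''), allow only pair-to-single steps at the end of an overlap, so $C$ rejects $z$. A correct version of your architecture needs both the minimal-chain normalization lemma and an additional pair-to-pair transition type performing a simultaneous discard-and-start; neither appears in your proposal.

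The paper sidesteps this difficulty entirely: its NFA for $2\mathbb{OC}^*(L(A))$ has as states the nonempty \emph{subsets} of $Q$, i.e., it runs a whole collection of copies of $A$ in parallel, starting new copies via a set $Y_{\rm add} \subseteq \delta(q_0,b)$ and retiring finished ones via a set $X_{\rm rem} \subseteq F$ within a single transition, with the length-two overlap condition enforced by requiring that at least one previously started copy survives any symbol on which new copies are created. That construction handles any number of simultaneously active blocks, including your problematic triple points, with no normalization argument. If you keep the two-copy design you must supply the missing lemma and transitions; otherwise the subset-of-states construction is the cleaner route.
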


\begin{proof}
%As mentioned above, the construction is inspired by
%Theorem~4 of~\cite{Csuhaj2007}. However, there are differences
%because 2-overlap catenation requires the overlapping strings
%to have length at least two and, furthermore, the iteration
%of the operations in~\cite{Csuhaj2007} is defined as
%right one-sided iteration.
Consider a regular language $L$ recognized by an
NFA $A = (\Sigma, Q, \delta, q_0, F)$.
%Let $\overline{Q} = \{ \overline{q} \mid q \in Q \}$ be
%a disjoint copy of the state set $Q$.
We construct for the language
$2\mathbb{OC}^*(L)$ an NFA $B = 
(\Sigma, 2^{Q}, \gamma, \{ q_0 \}, 
2^F - \{ \emptyset \} )$
where the transitions of $\gamma$ are defined below.

%First we extend $\delta$ as a function $(Q \cup \overline{Q}) \times
%\Sigma \rightarrow Q$ be setting $\delta(\overline{q}, b) = 
%\delta(q, b)$, $q \in Q$, $b \in B$. The purpose of the
%states of $\overline{Q}$ is just to enforce that the
%``overlapping parts'' in computations of $A$ have length at least
%two: $B$ nondeterministically begins an overlapping part by
%entering a state $\overline{q}$ where $q$ is  reachable
%by a symbol of $\Sigma$ from the initial state of $A$ and then
%(according to the definition of $\gamma$) by reading a second
%symbol of $\Sigma$ from $\overline{q}$, the simulated computation
%reaches an unbarred state of $Q$.

%Now we define the transitions of $\gamma$.
For $\emptyset \neq P \subseteq Q $ and  
$b \in \Sigma$ define
\begin{eqnarray*}
\gamma(P, b) & =  \{ & \; (\delta(P, b) - X_{\rm rem}) \cup
Y_{\rm add} \; \mid \; X_{\rm rem} \subseteq F, \;
\delta(P, b) - X_{\rm rem}  \neq \emptyset, \\
& & 
Y_{\rm add} \subseteq \delta(q_0, b) \;\; \}.
\end{eqnarray*}

States of $B$ are subsets
of $Q$ and the transition relation is nondeterministic:
$\gamma(P, b)$ is a collection of subsets of $Q$.

The computation of $B$  simulates multiple
computations of $A$. When reading a symbol $b \in \Sigma$,
the NFA $B$ can guess that this occurrence of $b$ begins
(one or more) 2-overlap-catenated strings, and adds to the
simulated computations the corresponding states
of $\delta(q_0, b)$. 
Always when a simulated computation
reaches a state of $F$, the NFA $B$ can nondeterministically
guess that $b$ ends a string that is 2-overlap concatenated
with another string.  This is done by the choice of
the set $X_{\rm rem} \subseteq F$ in the definition of $\gamma$.
Note that the
condition $\delta(P, b) - X_{\rm rem}  \neq \emptyset$ 
guarantees that at least one of the simulated computations
that were originated before reading $b$ must remain alive: this
enforces that the overlap with the new computations indeed will
be at least two.

It is clear that, by always choosing the sets $X_{\rm rem}$ and
$Y_{\rm add}$ correctly, $B$ has a computation on an
arbitrary string
in $w \in 2\mathbb{OC}^*(L(A))$ 
that ends in a state $P \subseteq F$,
$P \neq \emptyset$. The set $P$ consists of final states of $A$
that appear in an accepting computation in all strings that in
the representation of $w$ as a
 2-overlap catenation of strings of $L(A)$ are a suffix of $w$.
Note that the construction works also if $w$ has length one: in this case
$w$ must be an element of $L(A)$.

To verify the converse inclusion $L(B) \subseteq
2\mathbb{OC}^*(L(A))$, we note that, in general, some parts
of computations
of $B$ need not simulate any iterated 2-overlap concatenation of
a set of strings of $L(A)$. For example, if $A$
accepts both $x y z$ and $y$ where $|y| \geq 2$,  on the
string $x y z$ the NFA $B$ can begin a second computation $C_2$
when reading the first symbol of $y$ and this
computation then may end in a
final state at the end of the substring $y$.
However, the existence of
the superfluous computation $C_2$ cannot lead to new illegal computations
because the transitions of $\gamma$ add new computations
depending only on the input symbol and not  on the current state.
(In the transitions of $\gamma$, the sets $Y_{\rm add}$ depend only
on the input symbol and the initial state of $A$.)
Thus, the added superfluous computations cannot cause $B$ to accept 
strings not in $2\mathbb{OC}^*(L(A))$.
\prend
\end{proof}

By Corollary~\ref{catta24} and Lemma~\ref{latta25}
we have shown that unary regular languages are closed
under outfix-guided insertion closure, constrasting the
result of Theorem~\ref{tatta1} for general regular languages.

\begin{theorem}
\label{tatta22}
The outfix-guided insertion closure of a unary regular
language is always regular.
\end{theorem}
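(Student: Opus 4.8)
The plan is to obtain the result as an immediate consequence of the two statements established just above it, namely Corollary~\ref{catta24} and Lemma~\ref{latta25}, so that no new automaton construction is required at this stage.

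First I would fix a unary regular language $L \subseteq a^*$ and apply Corollary~\ref{catta24}, which itself rests on the commutativity argument of Lemma~\ref{latta23}, to replace the outfix-guided insertion closure by the 2-overlap catenation closure, obtaining the identity $\mathbb{OGI}^*(L) = 2\mathbb{OC}^*(L)$. This step is exactly what transfers the problem away from an operation whose iteration resists a direct subset-construction (the obstruction described in the discussion after Proposition~\ref{patta224}, where inserted middle parts would force an NFA to track correlated prefix--suffix pairs rather than a mere set of states) to the 2-overlap catenation, whose iteration is amenable to analysis.

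Next I would invoke Lemma~\ref{latta25}, which asserts that the 2-overlap catenation closure of any regular language --- unary or not --- is regular, to conclude that $2\mathbb{OC}^*(L)$ is regular. Chaining the two facts gives that $\mathbb{OGI}^*(L)$ is regular, which is the assertion to be proved.

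As for the main obstacle: at the level of this theorem there is essentially none, since all the substantive work has already been discharged in the supporting results. The genuinely hard part lives in Lemma~\ref{latta25}, whose NFA simulates many copies of $A$ over subset states and uses the sets $X_{\rm rem}$ and $Y_{\rm add}$ to terminate and start overlapping factors, with the condition $\delta(P,b) - X_{\rm rem} \neq \emptyset$ enforcing that each overlap has length at least two. Given that lemma together with Corollary~\ref{catta24}, the present theorem reduces to a one-line deduction; the only point I would be careful to check is that the unarity hypothesis of Corollary~\ref{catta24} is precisely the hypothesis assumed here, so that the identity $\mathbb{OGI}^*(L)=2\mathbb{OC}^*(L)$ is legitimately available and the passage to Lemma~\ref{latta25} (which needs no unarity) is valid.
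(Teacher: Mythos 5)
Your proof is correct and follows exactly the paper's own route: the paper derives Theorem~\ref{tatta22} as an immediate consequence of Corollary~\ref{catta24} (which reduces $\mathbb{OGI}^*(L)$ to $2\mathbb{OC}^*(L)$ for unary $L$) combined with Lemma~\ref{latta25} (regularity of the 2-overlap catenation closure). Your attention to the fact that unarity is needed only for the corollary and not for the lemma matches the paper's structure precisely.
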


\subsection{One-sided iterated outfix-guided insertion}

%From Proposition~\ref{patta1} we recall that any one-sided
%finite iteration of outfix-guided insertion preserves regularity.
The left and right one-sided insertion closures are restricted
variants of the general outfix-guided insertion closure, so 
Theorem~\ref{tatta1} does not directly imply
 the existence of regular languages $L_1$ and $L_2$
such that $\mathbb{LOGI}^*(L_1, L_2)$ or $\mathbb{ROGI}^*(L_1, L_2)$
are non-regular.
Here we show that the  one-sided outfix-guided insertion
closures are not, in general, regularity preserving. 
For the left-one one-sided outfix-guided insertion closure
the construction is similar to that used in the proof
of 	Theorem~\ref{tatta1}. However, this construction does
not work for right one-sided closure because
if  $L$ is the language
used in the proof of Theorem~\ref{tatta1}, then
$\mathbb{ROGI}^*(L, L)$ is the finite language
$L \cup \{ \$ a_3 a_1 a_2 b_1 b_3 \$ \}$.

\begin{proposition}
\label{latta18}
There exist finite languages $L_1$, $L_2$, $L_3$ and
$L_4$ such that\\
$\mathbb{ROGI}^{*}(L_1, L_2)$ 
and $\mathbb{LOGI}^{*}(L_3, L_4)$ are non-regular.
\end{proposition}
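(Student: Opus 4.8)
The plan is to construct two pairs of finite languages, one witnessing non-regularity of the right one-sided closure and one for the left one-sided closure, and in each case intersect the closure with a convenient regular language to expose a non-regular ``counting'' relation between two blocks of symbols. The guiding idea is the same as in Example~\ref{etavallinen} and Proposition~\ref{patta224}: iterated insertion of a fixed small string adds symbols in matched pairs, and by forcing the matched prefix and suffix to lie in disjoint letter regions we obtain a language of the form $\{a^i b^i\}$ (or similar) after intersecting with $a^* b^*$.

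For the right one-sided closure $\mathbb{ROGI}^*(L_1, L_2)$ the key feature I would exploit is that in the right-iterated definition every new string is obtained by inserting a previously generated (hence arbitrarily long) string back into a \emph{fixed} string of $L_1$. So I would take a target string in $L_1$ that contains a single short ``gap'' flanked by an $a$-block on the left and a $b$-block on the right, say $L_1 = \{ a c b \}$, and a seed $L_2 = \{ a c b \}$ as well (or a closely related singleton), arranged so that each right-iteration step lengthens both the matched $a$-part and the matched $b$-part by the same amount. Intersecting $\mathbb{ROGI}^*(L_1, L_2)$ with a regular language of the form $a^* c \, a^* \cdots$ should then yield a non-regular set in which the count on the left equals the count on the right, exactly as in the $\mathbb{ROGI}^*(\{aacc\}, \{abc\}) = a^+ b c^+$ computation of Example~\ref{example2} but rigged so the two exponents are forced equal rather than independent.

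For the left one-sided closure $\mathbb{LOGI}^*(L_3, L_4)$ I would reuse, essentially verbatim, the mechanism of Theorem~\ref{tatta1}. The cyclic family $L_1 = \{y_1, \dots, y_6\}$ of Lemma~\ref{latta16} together with the delimited seed $\$ a_3 a_1 b_1 b_3 \$$ drives the unbounded growth, and crucially that growth is already realized by repeatedly inserting strings of the fixed finite set $L_1$ into the single evolving ``target'' string. That is precisely a left one-sided derivation: the inserted strings all come from the fixed source $L_4 = L_1$, while the target is the seed $L_3 = \{ \$ a_3 a_1 b_1 b_3 \$ \}$. Thus I would set $L_3 = \{ \$ a_3 a_1 b_1 b_3 \$ \}$ and $L_4 = L_1$ and argue, citing the analysis in the proof of Theorem~\ref{tatta1}, that $\mathbb{LOGI}^*(L_3, L_4) = L_{\rm gen}$, which was already shown to be non-regular.

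The main obstacle is the converse (upper-bound) inclusion in each construction: showing that no \emph{unexpected} strings are produced. For the right one-sided case one must verify that, because every inserted string is forced to carry the separating marker and to match its outfix only against the fixed $a$-block and $b$-block of the target, the counts on the two sides can never drift apart; this is a case analysis on where the matched prefix and suffix of the inserted string can sit, analogous to the argument that $\mathbb{ROGI}^*(L_1,L_2) \subseteq a^+ b c^+$ in Example~\ref{example2}. For the left one-sided case the upper bound is inherited from Theorem~\ref{tatta1}, since its proof already establishes that inserting strings of $L_1$ into $L_{\rm gen}$ produces nothing outside $L_{\rm gen}$; I would simply note that the derivations used there are all of left one-sided type, so the same containment $L_{\rm gen} \leftarrow L_1 \subseteq L_{\rm gen}$ closes the argument.
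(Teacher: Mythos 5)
Your treatment of the left one-sided closure is correct and coincides with the paper's own proof: the paper also takes $L_3 = \{ \$ a_3 a_1 b_1 b_3 \$ \}$ and $L_4$ equal to the six-string language of Lemma~\ref{latta16}, notes that the generating derivations in the proof of Theorem~\ref{tatta1} are all left one-sided (so $L_{\rm gen} \subseteq \mathbb{LOGI}^*(L_3, L_4)$), and gets the reverse inclusion from that proof as well --- the paper routes it through $\mathbb{LOGI}^*(L_3,L_4) \subseteq \mathbb{OGI}^*(L_3 \cup L_4) = L_{\rm gen}$, while you induct directly on $L_{\rm gen} \leftarrow L_4 \subseteq L_{\rm gen}$; these are equivalent.

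The right one-sided half, however, has a genuine gap: you never exhibit a working pair, and your concrete candidate fails. With $L_1 = L_2 = \{acb\}$, the only non-trivial outfixes of the inserted string $acb$ are $(a,cb)$, $(ac,b)$ and $(a,b)$, and of these only the first two match a substring of the target $acb$; both reproduce $acb$ itself, so $\mathbb{ROGI}^*(\{acb\},\{acb\}) = \{acb\}$ is finite. The deeper problem is the mechanism you describe --- ``each right-iteration step lengthens both the matched $a$-part and the matched $b$-part by the same amount'' --- which is impossible in a right one-sided iteration: the matched parts $u,v$ are substrings of the \emph{fixed, finite} target $t \in L_1$, so they can never grow. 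All growth comes from wrapping the evolving string $y$ with the flanks of a decomposition $t = x_1 u v x_2$, producing $x_1 y x_2$; for $t = acb$ every such decomposition has $x_1 = \varepsilon$ or $x_2 = \varepsilon$, so balanced two-sided growth cannot even be forced. The paper's missing idea is to use \emph{two} target strings that are consumed alternately: $L_1 = \{acdb,\, cabd\}$ and $L_2 = \{a \$ b\}$. Inserting the evolving string into $cabd$ via the match $(a,b)$ wraps it with $(c,d)$, and inserting the result into $acdb$ via the match $(c,d)$ wraps it with $(a,b)$; the marker $\$$ (absent from $L_1$) blocks all other non-trivial insertions, which yields exactly $\{(ca)^i \$ (bd)^i \mid i \geq 0\} \cup \{a(ca)^i \$ (bd)^i b \mid i \geq 0\}$, a non-regular set with no need for an auxiliary intersection. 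Without this alternation device (or some substitute for it), the ``rigging'' of Example~\ref{example2} that you appeal to does not go through.
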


\begin{proof}
We consider first the right one-sided outfix-guided insertion closure.
Let $\Sigma = \{ \$, a, b, c, d \}$ and
choose $L_2 = \{ a \$ b \}$, $L_1 = \{ acdb, cabd \}$.
Then
$$
\mathbb{ROGI}^*(L_1, L_2) = \{ (ca)^i \$ (bd)^i \mid i \geq 0 \}
\cup  \{ a(ca)^i \$ (bd)^i b \mid i \geq 0 \},
$$
which is non-regular. Inserting $a \$ b$ into $cabd$ derives
$ca \$ bd$ and next, in  a right one-sided derivation,
inserting the latter string  into $acdb$ derives
$aca \$ bdb$. Continuing in this way we get a right one-sided derivation
for all strings in the set appearing on the right side of the
equation.

The fact that $\mathbb{ROGI}^*(L_1, L_2)$ does not
contain any additional strings follows
from the property of right one-sided iterated insertions:
all strings that are inserted into $L_1$ will have the marker
\$ and strings of $L_1$ cannot be inserted into strings of $L_1$.
We leave to the reader
the details of verifying that the inclusion holds from
left to right.

The construction of the languages $L_3$ and $L_4$ for the
left one-sided outfix-guided insertion closure is obtained
by modifying the language in the proof of Theorem~\ref{tatta1}.
Let $\Sigma = \{ a_1, a_2, a_3, b_1, b_2, b_3  \}$ and define
$L_3 = \{  \$ a_3 a_1 b_1 b_3 \$ \}$ and 
$$
L_4 = \{  
a_3 a_1 a_2 b_1, \;\;
a_2 b_2 b_1 b_3, \;\;
a_1 a_2 a_3 b_2, \;\;
a_3 b_3 b_2 b_1, \;\;
a_2 a_3 a_1 b_3, \;\;
a_1 b_1 b_3 b_2
\}.
$$
Denote
$$
L_5 = \{ \$ a_3 (a_1 a_2 a_3)^i z (b_3 b_2 b_1)^i b_3 \$ \mid i \geq 0,\;
z \in S_{\rm middle} \},
$$
where
$S_{\rm middle} = \{   a_1 b_1, \; a_1 a_2 b_1,  \;
a_1 a_2 b_2 b_1, \;
a_1 a_2 a_3 b_2 b_1, \;
a_1 a_2 a_3 b_3 b_2 b_1, \;
a_1 a_2 a_3 a_1 b_3 b_2 b_1  \}$.

From the proof of Theorem~\ref{tatta1} it follows that
\begin{equation*}
\mathbb{LOGI}^*(L_3, L_4) =  L_5.
\end{equation*}
Note that the first part of the proof of 
Theorem~\ref{tatta1} establishes that all strings of
$L_5$ 
are obtained by left one-sided iterated insertion of $L_4$
into $ \$ a_3 a_1 b_1 b_3 \$ $. Thus, 
$L_5 \subseteq \mathbb{LOGI}^*(L_3, L_4)$. The proof of
Theorem~\ref{tatta1} also establishes that
$\mathbb{OGI}^*(L_3 \cup L_4) = L_5$ and directly by the definition
of the iterated operations,
$\mathbb{LOGI}^*(L_3, L_4) \subseteq \mathbb{OGI}^*(L_3 \cup L_4)$.
\prend
\end{proof}

%On the other hand,
%$\mathbb{LOGI}^*(L_1, L_2)$  is the finite
%language $\{ acdb, cabd, ca \$ bd \}$. The general insertion
%closure of $L_1 \cup L_2$, $\mathbb{OGI}^*(L_1 \cup L_2)$ is 
%infinite and regular.

%Essentially the same construction used in the proof
%of Theorem~\ref{tatta1} establishes that left one-sided outfix-guided
%insertion closure does not preserve regularity.
%
%\begin{lemma}
%\label{latta19}
%There exist finite languages $L_1$ and $L_2$ such that
%$\mathbb{LOGI}^{*}(L_1, L_2)$ is non-regular.
%\end{lemma}

\section{Outfix-Guided Insertion and Context-Free Languages}
\label{sec-context-free}

It is well known that the family of context-free languages
is closed under ordinary insertion.
We show that context-free languages are not closed under
outfix-guided (or prefix-guided, suffix-guided, respectively) insertion.
This contrasts also  the corresponding result 
for regular languages from Lemma~\ref{latta20}. 

\begin{theorem}
\label{CFtheorem}
There exists a context-free language $L$ such that
$L \leftarrow L$
is not context-free.
\end{theorem}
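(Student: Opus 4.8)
The goal is to find a context-free language $L$ such that $L \leftarrow L$ fails to be context-free. The plan is to design $L$ so that the outfix-guided insertion operation forces a \emph{double matching} of counts that the pumping lemma for context-free languages cannot sustain. The natural target is a language like $\{ a^n b^n c^n \mid n \geq 0 \}$ or a close relative, since producing such a three-way correlation is the standard way to escape the context-free class.

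First I would take $L$ to be context-free but chosen so that a single outfix-guided insertion splices together two correlated blocks. The idea: let the ``target'' strings in $L$ carry one pair of matched counts through the outfix $(u,v)$, while the ``source'' strings carry a second correlated count in the inserted middle $z$. Concretely I would try something along the lines of $L = \{ d\, a^n b^n e \mid n \geq 0 \} \cup \{ a^m c^m \mid m \geq 0 \}$ (up to adjustment of marker symbols), arranged so that when a source string $u z v = a^m c^m$ with nonempty matched prefix $u \in a^+$ and nonempty matched suffix $v \in c^+$ is inserted into a target $x = x_1 u v x_2$, the overlap condition forces the matched $a$-block and $c$-block lengths to agree with the counts already present in $x$. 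After intersecting $L \leftarrow L$ with a suitable regular language $R$ of the form (prefix)$a^*b^*c^*$(suffix) to strip away the ``junk'' derivations, the surviving strings should be exactly those where three exponents coincide, e.g.\ $\{ a^n b^n c^n \mid n \geq N_0 \}$ for some fixed threshold.

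The key steps, in order, are: (1) fix the precise definition of $L$ with enough marker symbols that the only productive insertions are the intended ones, mirroring the role of $\$$ and the disjoint alphabet blocks in the proof of Theorem~\ref{tatta1}; (2) verify the inclusion showing $(L \leftarrow L) \cap R$ contains all strings $a^n b^n c^n$ (beyond a threshold) by exhibiting the explicit derivation, i.e.\ choosing which string is the target, which is the source, and what the matched outfix $(u,v)$ is; (3) verify the reverse inclusion, that no \emph{other} strings of the form $a^* b^* c^*$ can be produced — this is where the marker symbols and the non-overlapping alphabet blocks do the work, exactly as Lemma~\ref{latta16} rules out unintended insertions in the earlier construction; and (4) invoke the context-free pumping lemma on $\{ a^n b^n c^n \mid n \geq N_0 \}$ to conclude non-context-freeness, hence $L \leftarrow L \cap R$ is not context-free, and since regular intersection preserves context-freeness, $L \leftarrow L$ is not context-free either.

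The main obstacle I expect is step (3): controlling \emph{all} the spurious ways the operation could act. Outfix-guided insertion allows the matched prefix $u$ and suffix $v$ to be chosen freely as any non-empty substrings, and it allows $L$ to be inserted into itself in either role, so there are many combinatorial cases — target-into-source, source-into-source, insertions that cross block boundaries, and insertions using degenerate matched parts. I would tame this by using \emph{distinct, disjoint alphabets} for the different structural roles (as in Lemma~\ref{latta16}) and endcap markers so that any insertion crossing a boundary produces a symbol pattern outside $R$, making it invisible after intersection with $R$. The delicate point is ensuring the matched outfix is forced to align the $a$-count and $c$-count of the source with the corresponding counts already in the target; getting a clean equality (rather than an inequality, as happened with the $\mathbb{PGI}^*$ example in Proposition~\ref{patta224}) is what distinguishes a genuine non-context-free witness from a merely non-regular one, and this is where the bulk of the case analysis will live.
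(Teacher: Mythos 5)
Your high-level strategy---take $L$ to be a union of two linear context-free pieces carrying two different count correlations, let a single outfix-guided insertion splice them into a triply correlated string, intersect with a regular set, and finish by closure of CFLs under intersection with regular languages---is exactly the strategy of the paper's proof. But your concrete witness has a genuine defect, and it sits precisely at the point you defer to ``where the bulk of the case analysis will live.'' With $L = \{ d\, a^n b^n e \} \cup \{ a^m c^m \}$, the insertion you describe cannot happen at all: the matched parts $u \in a^+$ and $v \in c^+$ must occur \emph{concatenated} as a factor $uv$ of the target, but the target strings $d\, a^n b^n e$ contain no $c$. Worse, the defect is structural, not a matter of adding endcap markers. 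If the source is $a^m c^m$ with matched prefix $u = a^i$ and matched suffix $v = c^j$, then the middle is $z = a^{m-i} c^{m-j}$, and the insertion merely replaces an occurrence of the factor $a^i c^j$ in the target by $a^m c^m$. Nothing forces any equality between $i$ and $j$, or between the target's counts and $m$: the only constraints are inequalities ($i$, $j$ at most the lengths of the adjacent runs in the target), which is exactly the Proposition~\ref{patta224} situation you correctly identified as insufficient. The root cause is that your source's correlation (equally many $a$'s as $c$'s) is split across $u$, $z$ and $v$; as long as both matched parts sit inside the two correlated blocks, the insertion destroys the target's correlation across that boundary instead of chaining the two correlations together, so steps (2)--(3) of your plan cannot produce a non-context-free intersection.

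The missing idea is the placement of the correlated blocks relative to the decomposition $y = uzv$: the source's correlation must tie the \emph{matched prefix} to the \emph{inserted middle}, and interior markers must force that matched prefix to cover a complete block of the target. The paper takes, over $\Sigma = \{\$, a, b, c\}$, $L = \{ \$ a^n \$ \$ c^n \mid n \geq 1 \} \cup \{ \$ a^n \$ b^n \$ \mid n \geq 1 \}$. Inserting a source $\$ a^m \$ b^m \$$ into a target $\$ a^n \$ \$ c^n$ with matched prefix $u = \$ a^m \$$ and matched suffix $v = \$$ requires $uv = \$ a^m \$ \$$ to be a factor of $\$ a^n \$ \$ c^n$, which the dollar signs force to happen only with $m = n$; the result is $\$ a^n \$ b^n \$ c^n$, where the middle $b^n$ inherits its count from the matched prefix, which in turn inherits it from the target, whose own correlation supplies $c^n$. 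A short case analysis (counting occurrences of $\$$ and noting which letters occur in which half of $L$) gives $(L \leftarrow L) \cap \$ a^+ \$ b^+ \$ c^+ = \{ \$ a^n \$ b^n \$ c^n \mid n \geq 1 \}$, which is not context-free, and then your step (4) goes through verbatim. So your plan's outer shell is right, but the one design decision you left open is the entire mathematical content of the proof, and the particular shape you proposed for the source cannot be repaired by markers alone.
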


\begin{proof}
Let $\Sigma = \{ \$, a, b, c \}$. By choosing
$$
L = \{ \$ a^n \$ \$ c^n \mid n \geq 1 \} \cup
\{ \$ a^n \$ b^n \$ \mid n \geq 1 \}
$$
we note that
$$
(L \leftarrow L) \cap \$ a^+ \$ b^+ \$ c^+
= \{ \$ a^n \$ b^n \$ c^n \mid n \geq 1 \}. 
$$
The claim follows since the intersection of a context-free
language and a regular language is always
context-free~\cite{Shallit09}.
\prend
\end{proof}

The same language $L$ as in the proof of Theorem~\ref{CFtheorem}
can be used to establish that context-free languages are not closed
under prefix-guided insertion and the reversal of $L$ can be used
to establish non-closure under suffix-guided insertion.

\begin{corollary}
\label{catta223}
There exist context-free languages $L_1$ and $L_2$ such that
$L_1 \stackrel{\rm pgi}{\leftarrow} L_1$ and
$L_2 \stackrel{\rm sgi}{\leftarrow} L_2$ are not context-free.
\end{corollary}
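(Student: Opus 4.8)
The plan is to reuse the context-free language $L$ from the proof of Theorem~\ref{CFtheorem}, namely
$$
L = \{ \$ a^n \$ \$ c^n \mid n \geq 1 \} \cup \{ \$ a^n \$ b^n \$ \mid n \geq 1 \},
$$
but now analyzing $L \stackrel{\rm pgi}{\leftarrow} L$ rather than outfix-guided insertion. The idea is identical in spirit: I want to intersect the result with a suitable regular language so that the only surviving strings are a known non-context-free language such as $\{ \$ a^n \$ b^n \$ c^n \mid n \geq 1 \}$. First I would observe that to produce a string in $\$ a^+ \$ b^+ \$ c^+$ one must insert a string of the second form $\$ a^m \$ b^m \$$ into a string of the first form $\$ a^n \$ \$ c^n$, with the matched prefix being $\$ a^m \$$ placed at the corresponding $\$ a^n \$$ prefix of the target, so that $m = n$ is forced and the $b$-block gets inserted exactly between the two middle dollars. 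Thus $(L \stackrel{\rm pgi}{\leftarrow} L) \cap \$ a^+ \$ b^+ \$ c^+ = \{ \$ a^n \$ b^n \$ c^n \mid n \geq 1 \}$, which is not context-free, and since context-free languages are closed under intersection with regular languages, $L_1 = L$ witnesses the prefix-guided claim.

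For the suffix-guided case I would take $L_2 = L^R$, the reversal of $L$, and exploit the symmetry between prefix- and suffix-guided insertion already noted in Example~\ref{eetta2} and Proposition~\ref{patta224}. Concretely, $L^R = \{ c^n \$ \$ a^n \$ \mid n \geq 1 \} \cup \{ \$ b^n \$ a^n \$ \mid n \geq 1 \}$, and the suffix-guided insertion mirrors the prefix-guided argument: inserting $\$ b^m \$ a^m \$$ into $c^n \$ \$ a^n \$$ by matching the suffix $\$ a^m \$$ forces $m = n$ and drops the $b$-block into place. Intersecting with $c^+ \$ b^+ \$ a^+ \$$ again yields a non-context-free language, so $L_2 = L^R$ works.

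The main obstacle — and the only place requiring real care — is verifying that the intersection contains \emph{exactly} the non-context-free set and no stray strings that would break the argument. In particular I must check that no other combination of matched prefix and target decomposition can land inside the regular language $\$ a^+ \$ b^+ \$ c^+$: for instance, inserting a first-form string into a second-form string, or using a matched prefix that straddles a dollar sign, must be ruled out because it would yield the wrong pattern of dollars or mismatched block boundaries. This is a finite case analysis over which of the two sublanguages supplies the target and which supplies the inserted string, together with where the matched prefix $y_1$ ends; each case either produces a string outside $\$ a^+ \$ b^+ \$ c^+$ or forces the exponent equality $m = n$. I expect this bookkeeping to be entirely routine given the rigidity enforced by the three separating dollar signs.
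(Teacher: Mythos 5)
Your overall strategy---reusing the language $L$ from Theorem~\ref{CFtheorem} for the prefix-guided case and its reversal $L^R$ for the suffix-guided case---is exactly what the paper does (its proof is a one-sentence remark to this effect). However, your central claimed equality is false, because you have carried over a feature of \emph{outfix}-guided insertion that prefix-guided insertion does not have. In outfix-guided insertion the suffix $v$ of the inserted string is matched against (i.e.\ overlapped with) a substring of the host, so inserting $\$ a^n \$ b^n \$$ into $\$ a^n \$ \$ c^n$ merges the trailing $\$$ of the inserted string with a $\$$ of the host, yielding $\$ a^n \$ b^n \$ c^n$ with three dollars. In prefix-guided insertion there is no absorption at the right end: by Definition~\ref{prefsuf} the result is $x_1 y_1 y_2 x_2$ with $x = x_1 y_1 x_2$, so it contains \emph{every} symbol of $x$ plus every symbol of $y_2$. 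With $x = \$ a^n \$ \$ c^n$, $y = \$ a^n \$ b^n \$$ and matched prefix $y_1 = \$ a^n \$$, the spliced-in remainder is $y_2 = b^n \$$, including its trailing dollar, and the result is $\$ a^n \$ b^n \$ \$ c^n$ --- four dollars, hence not in $\$ a^+ \$ b^+ \$ c^+$. In fact $(L \stackrel{\rm pgi}{\leftarrow} L) \cap \$ a^+ \$ b^+ \$ c^+ = \emptyset$: every string of $L \stackrel{\rm pgi}{\leftarrow} L$ contains the three dollars of the host plus the dollars of $y_2$, so membership in $\$ a^+ \$ b^+ \$ c^+$ would force $y_2$ to be dollar-free; the only dollar-free non-empty suffixes of strings of $L$ have the form $c^t$, and splicing them in produces strings with two adjacent dollars and no $b$'s. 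So your intersection is empty (hence regular), and the argument as written proves nothing.

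The repair is small but essential: intersect with $\$ a^+ \$ b^+ \$ \$ c^+$ instead. A finite case analysis of the kind you describe (which sublanguage supplies the host, which supplies the inserted string, and where $y_1$ ends) shows that a string with this dollar pattern can arise only from your intended insertion, where $y_1 = \$ a^m \$$ must occur in $\$ a^n \$ \$ c^n$, forcing $m = n$; thus
\begin{equation*}
(L \stackrel{\rm pgi}{\leftarrow} L) \cap \$ a^+ \$ b^+ \$ \$ c^+ = \{ \$ a^n \$ b^n \$ \$ c^n \mid n \geq 1 \},
\end{equation*}
which is not context-free (erase the dollars by a homomorphism to get $\{ a^n b^n c^n \mid n \geq 1\}$), and closure of context-free languages under intersection with regular sets gives the claim. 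The suffix-guided half has the same flaw and the same fix: inserting $\$ b^n \$ a^n \$$ into $c^n \$ \$ a^n \$$ by matching the suffix $\$ a^n \$$ produces $c^n \$ \$ b^n \$ a^n \$$ with four dollars, so you must intersect with $c^+ \$ \$ b^+ \$ a^+ \$$, not with $c^+ \$ b^+ \$ a^+ \$$.
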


%It follows also that context-free languages are not closed under one-sided
%outfix-guided iteration because, for any language $L$,
%$\mathbb{OGI}^{(1)}(L) = 
%\mathbb{ROGI}^{(1)}(L, L) = 
%\mathbb{LOGI}^{(1)}(L, L) = L \leftarrow L$.
On the other hand, the outfix-guided insertion of a regular
(respectively, context-free)
language into a context-free (respectively, regular) language
is always context-free.

\begin{theorem}
\label{CFREGtheorem}
If $L_1$ is context-free and $L_2$ is regular, then
$L_1 \leftarrow L_2$ and $L_2 \leftarrow L_1$ are context-free.
\end{theorem}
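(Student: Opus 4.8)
The plan is to adapt the NFA construction from Lemma~\ref{latta20}, replacing the finite automaton that recognizes the ``target'' language with a pushdown automaton when that language is context-free. Recall that in the construction of Lemma~\ref{latta20} for $L_1 \leftarrow L_2$, the automaton for $L_1$ (the string being inserted into) and the automaton for $L_2$ (the string providing the outfix) were simulated in separate components of the state, with the two simulations running in parallel only during the matched parts $u$ and $v$. The crucial observation is that in each of the two claims, exactly one of the two languages is regular, and the inserted/matched structure only requires finite-state bookkeeping for whichever component corresponds to the regular language.

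For the case $L_1 \leftarrow L_2$ with $L_1$ context-free and $L_2$ regular: let $L_1$ be recognized by a PDA $A$ and $L_2$ by an NFA $B$. I would build a PDA $C$ that carries a state of the form $(q, \sigma)$, where $q$ ranges over the states of the NFA $B$ (or the marker symbols $\clubsuit, \heartsuit$ and the barred copies) tracking the second-component simulation exactly as in Lemma~\ref{latta20}, while the pushdown store and the PDA-state of $A$ simulate the computation of $A$ on $x = x_1 u v x_2$. The point is that the inserted string $z$ (lying between $u$ and $v$) is read by $C$ without advancing $A$'s simulation at all — $A$'s stack is simply left untouched while $C$ processes $z$ using only the second (finite) component to verify that $uzv \in L_2$. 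Since the portion of the input that $A$ must ``skip'' is a genuine factor of the output string rather than something requiring a second stack, a single pushdown suffices. The correctness argument then mirrors the two inclusions proved in Lemma~\ref{latta20}, with ``simulate $A$'' meaning ``advance the PDA and its stack'' and ``simulate $B$'' meaning ``advance the finite second component.''

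For the symmetric case $L_2 \leftarrow L_1$ with $L_2$ regular and $L_1$ context-free, the roles are swapped: now $L_1$ (context-free) is the source of the inserted string and $L_2$ (regular) is the target. Here the PDA for $L_1$ is only needed to verify that the factor $uzv$ of the output belongs to $L_1$, while the finite-state machine for $L_2$ checks that $x_1 uv x_2 \in L_2$. Again the combined machine uses one stack (for the $L_1$-simulation) together with a finite component (for the $L_2$-simulation), and the two simulations overlap precisely during the matched parts $u$ and $v$, exactly as in the reference construction. Alternatively, one can observe that context-free languages are closed under the relevant homomorphic and intersection-with-regular operations and phrase $L_2 \leftarrow L_1$ as such a combination, but the direct PDA construction is cleaner and reuses the established machinery.

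The main obstacle to watch for is ensuring that a single pushdown store genuinely suffices — that is, confirming that at no point does the construction require simultaneously remembering unbounded information about two independently-growing substrings. This is guaranteed by the structure of outfix-guided insertion: the output $x_1 u z v x_2$ is read left to right, and the context-free component needs its stack only for one coherent scan (either of $x_1 u v x_2$ with $z$ skipped, or of $u z v$), never for two interleaved counting tasks at once. The finite component handles everything else, so the parallel-simulation bookkeeping of Lemma~\ref{latta20} goes through verbatim on the finite side while the stack rides along on the context-free side. Verifying both inclusions is then routine and entirely parallel to the proof of Lemma~\ref{latta20}, so I would state that the correctness proof follows that construction \emph{mutatis mutandis} rather than repeating it in full.
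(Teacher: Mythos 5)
Your proposal is correct and follows essentially the same route as the paper: the paper's proof likewise reuses the product construction of Lemma~\ref{latta20}, letting the pushdown stack of the PDA ride along with the context-free component (stack operations only on transitions simulating the PDA, stack untouched on transitions simulating only the NFA), and handles $L_2 \leftarrow L_1$ by interchanging the roles of the two machines. Your added remark about why a single stack suffices --- the context-free simulation is paused, not restarted, while the other component reads its portion --- is exactly the implicit justification in the paper's argument.
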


\begin{proof}
Suppose $L_1$ is recognized by a nondeterministic PDA
$M$ and $L_2$ 		is recognized by an NFA $A$. By combining
the finite state transitions of $M$ and $A$ as in the proof of
Lemma~\ref{latta20}, and simultaneously simulating the pushdown stack
of $M$ we can construct a PDA $M_1$ for $L_1 \leftarrow L_2$.
Always when $M_1$ makes a transition simulating a transition
of $M$, it makes a corresponding stack operation. On
the other hand, transitions
of $M_1$ simulating only transitions of $A$ do not touch the stack.
A PDA for $L_2 \leftarrow L_1$ is obtained by interchanging in the
construction of Lemma~\ref{latta20} the roles of $M$ and $A$.
\prend
\end{proof}

The analogy of Theorem~\ref{CFREGtheorem} does not hold for
deterministic context-free languages.
Techniques for proving that a language is not deterministic
context-free are known already from \cite{Ginsburg66}.

%\begin{lemma}
%\label{latta31}
%There exist a DCFL $L_1$ and a regular 
%language $L_2$ such that $L_1 \leftarrow L_2$ 
%is not deterministic context-free.
%\end{lemma}

%\begin{lemma}
%\label{latta32}
%There exist a regular language 
%$L_1$ and a DCFL $L_2$ such that 
%$L_1 \leftarrow L_2$ is not deterministic context-free.
%\end{lemma}

\begin{theorem}
\label{tatta33}
If $L_1$ is deterministic context-free and $L_2$ is regular, the languages
$L_1 \leftarrow L_2$ or $L_2 \leftarrow L_1$ need not be deterministic
context-free.
\end{theorem}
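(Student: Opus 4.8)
The plan is to exhibit a specific deterministic context-free language $L_1$ and a regular language $L_2$ so that $L_1 \leftarrow L_2$ (or symmetrically $L_2 \leftarrow L_1$) contains, after intersection with a suitable regular language, a non-context-free language, which in particular cannot be deterministic context-free. The simplest route mirrors the proof of Theorem~\ref{CFtheorem}: I want the outfix-guided insertion to ``glue together'' two counts that are separately easy but jointly force a triple equality like $a^n b^n c^n$. The key difference from Theorem~\ref{CFtheorem} is that here $L_1$ must be a single \emph{deterministic} context-free language rather than an arbitrary context-free language, and the insertion happens against a \emph{regular} partner $L_2$.

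First I would choose $L_1$ to be a DCFL that records one matched count, say $L_1 = \{ \$ a^n \$ b^n \$ \mid n \geq 1 \}$, which is recognized by a deterministic PDA that pushes on the $a$'s and pops on the $b$'s with the $\$$ markers disambiguating the phases. Then I would pick $L_2$ to be a regular language whose strings can be outfix-guided inserted so as to append a $c$-block whose length is tied, through the matched outfix, to the existing structure. The design goal is that a matched prefix/suffix pair forces a new block of $c$'s to line up with the block of $b$'s (or $a$'s), so that $(L_1 \leftarrow L_2) \cap \$ a^+ \$ b^+ \$ c^+$ equals $\{ \$ a^n \$ b^n \$ c^n \mid n \geq 1 \}$. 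Since that intersection is not context-free, and DCFL is closed under intersection with regular languages, $L_1 \leftarrow L_2$ cannot be deterministic context-free.

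The main obstacle is getting the outfix-guided mechanism to \emph{enforce} a third equal count rather than merely permitting it: because the matched outfix consists of a non-empty prefix and a non-empty suffix of the inserted string, I must arrange $L_2$ so that the only way a $c$-block of the right length can be produced is by matching against a $b$-block (or $a$-block) of equal length, while all ``wrong'' lengths fall outside the regular filter $\$ a^+ \$ b^+ \$ c^+$. Concretely, I expect to let $L_2$ supply strings of the form that carry both a copy of a $b$-suffix (to match inside $L_1$) and a trailing $c$-block, so that the inserted middle part deposits exactly the $c$'s; making the matched suffix force equality of the $b$-count and the $c$-count is the delicate part. An alternative I would keep in reserve is to realize the symmetric operation $L_2 \leftarrow L_1$, inserting the DCFL's $\$ a^n \$ b^n \$$ into a regular scaffold $\$ a^+ \$ b^+ \$ c^+$-like template, which may make the length-coupling cleaner. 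Once the intersection is pinned down to $\{ \$ a^n \$ b^n \$ c^n \mid n \geq 1 \}$, the conclusion is immediate by a standard pumping or Ogden-type argument showing this language is not context-free, hence a fortiori not deterministic context-free, and by the closure of (D)CFL under intersection with regular languages~\cite{Shallit09,Ginsburg66}.
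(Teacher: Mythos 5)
Your plan cannot work, and the obstacle you yourself flag as ``the delicate part'' is not merely delicate --- it is provably insurmountable. You aim to choose a regular $L_2$ so that $(L_1 \leftarrow L_2) \cap \$ a^+ \$ b^+ \$ c^+$ equals $\{ \$ a^n \$ b^n \$ c^n \mid n \geq 1\}$, a non-context-free language. But Theorem~\ref{CFREGtheorem} of this paper shows that whenever $L_1$ is context-free (in particular, deterministic context-free) and $L_2$ is regular, both $L_1 \leftarrow L_2$ and $L_2 \leftarrow L_1$ are \emph{always context-free}; since context-free languages are closed under intersection with regular languages, no regular filter can ever extract a non-CFL from them. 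The mechanism-level reason is the one you half-identify: the matched outfix only forces the prefix $u$ and suffix $v$ of the inserted string to be literal copies of substrings of the host; it cannot correlate the length of the inserted middle $z$ (your $c$-block) with a count stored in the host, and a regular $L_2$ cannot itself carry the correlation between its $b$-part and $c$-part. The trick of Theorem~\ref{CFtheorem} works only because \emph{both} arguments are context-free, so each carries one of the two correlations ($a$--$c$ in one, $a$--$b$ in the other) and the matched outfix $\$ a^n \$ \cdots \$$ glues them; with one argument regular this is irretrievably lost.

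The fix is to aim lower: the theorem only requires the result to fail to be \emph{deterministic} context-free, and (by Theorem~\ref{CFREGtheorem}) it will necessarily still be context-free, so the witness must be an inherently nondeterministic CFL rather than a non-CFL. The paper's proof does exactly this. Take $L_1 = \{cd\,a^i b^i a^j \mid i,j \geq 1\} \cup \{c\,a^i b^j a^j \mid i,j \geq 1\}$, which is a DCFL because the presence or absence of $d$ in the second position tells a deterministic PDA which branch it is in, and $L_2 = \{cda\}$. The only non-trivial insertions of $cda$ go into words of the second branch (matching $u = c$, $v = a$), and one gets $L_1 \leftarrow L_2 = cd \cdot (\{a^i b^i a^j \mid i,j\geq 1\} \cup \{a^i b^j a^j \mid i,j \geq 1\})$, which by the results of Ginsburg and Greibach~\cite{Ginsburg66} is not a DCFL: the insertion erases precisely the marker that made the union deterministic. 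A symmetric construction (regular $L_3 = a^*bac + aba^*$, DCFL $L_4 = \{b^ja^jc\} \cup \{a^ib^ia^2\}$) handles the direction $L_2 \leftarrow L_1$. If you want to salvage your writeup, replace your $a^n b^n c^n$ target with such a union-of-two-DCFLs target; your closing appeal to closure of DCFL under intersection/quotient with regular sets then does real work, whereas in your current draft it is applied to an intersection that can never be realized.
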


\begin{proof}
First we show that
there exist a DCFL $L_1$ and a regular language $L_2$ such 
that $L_1 \leftarrow L_2$ is not deterministic context-free.

Let $L_1 = \{c d a^i b^i a^j \mid i,j \geq 1\} 
\cup \{c a^i b^j a^j \mid i,j \geq 1\}$ and $L_2 = \{cda\}$. 
We can outfix-guided insert $cda$ in a non-trivial way only
into words of the form $c a^i b^j a^j$, which gives us
\begin{equation*}
L_1 \leftarrow L_2 = cd \cdot (\{a^i b^i a^j \mid i,j \geq 1\} \cup \{a^i b^j a^j \mid i,j \geq 1\}).
\end{equation*}
From \cite{Ginsburg66}, we have that 
$L \subseteq (\Sigma - c)^*$ is a DCFL if and 
only if $cL$ is a DCFL and that the language 
$(\{a^i b^i a^j \mid i,j \geq 1\} \cup \{a^i b^j a^j \mid i,j \geq 1\})$ 
is not deterministic. Thus, $L_1 \leftarrow L_2$ is not a DCFL.

Second, we show that
there exist a regular language $L_3$ and a DCFL $L_4$ such that 
$L_3 \leftarrow L_4$ is not deterministic context-free.

Let $L_3 = (a^* b a c) + (aba^*)$ and 
$L_4 = \{b^j a^j c \mid j \geq 1\} \cup \{a^i b^i a^2 \mid i \geq 1\}$. 
We can  insert words of the form $b^j a^j c$ in a non-trivial
way only into words of the form 
$a^i b a c$. This gives us the set $\{ a^i b^j a^j c \mid i,j \geq 1 \}$. 
Similarly, we can only insert words of the form $a^i b^i a^2$ 
into words of the form $a b a^2 a^j$, resulting in the set 
$\{ a^i b^i a^j \mid i \geq 1, j \geq 2 \}$. Thus,
\begin{equation*}
L_3 \leftarrow L_4 = \{ a^i b^j a^j c \mid i,j \geq 1 \} 
\cup \{ a^i b^i a^j \mid i \geq 1, j \geq 2 \},
\end{equation*}
which is not deterministic context-free.
\prend
\end{proof}

Theorem~\ref{CFtheorem} raises the question how complex languages can
be obtained from context-free languages using iterated outfix-guided
insertion. Note that if $L_1$ and $L_2$ are context-free, it is
easy to verify that $L_1 \leftarrow L_2$ is  deterministic
context-sensitive. Next we consider the corresponding question
for the insertion closures.

\begin{proposition}
\label{patta31}
If $L_1$ and $L_2$ are context-free then
$\mathbb{ROGI}^*(L_1, L_2)$ and\\ $\mathbb{LOGI}^*(L_1, L_2)$
are context-sensitive.
\end{proposition}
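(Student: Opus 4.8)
The plan is to show that both $\mathbb{ROGI}^*(L_1,L_2)$ and $\mathbb{LOGI}^*(L_1,L_2)$ are accepted by nondeterministic linear-bounded automata; since the context-sensitive languages are exactly the languages in $\mathrm{NSPACE}(n)$ (Kuroda), this gives the claim. The entire argument rests on one structural observation: in any derivation $w_0 \Rightarrow w_1 \Rightarrow \cdots \Rightarrow w_m = w$ of a target $w$ of length $n$, every intermediate string $w_i$ has length at most $n$. For the left one-sided closure this is immediate, since a step $w_{i-1} \stackrel{[y]}{\Rightarrow} w_i$ with $w_{i-1} = \alpha u v \beta$ and $y = uzv$ (matched parts $u,v \ne \varepsilon$) produces $w_i = \alpha u z v \beta$, so $|w_i| = |w_{i-1}| + |z| \ge |w_{i-1}|$ and lengths are non-decreasing along the derivation. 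For the right one-sided closure the same bound follows from the nesting in Definition~\ref{datta5}: a step inserts the current string $w_{i-1}$ into a fresh target $t_i \in L_1$, so $w_{i-1}$ occurs as a contiguous factor of $w_i$; writing $w_i = p_i w_{i-1} s_i$ with $t_i = p_i u_i v_i s_i \in L_1$ and $w_{i-1} = u_i z_i v_i$, we again obtain $|w_{i-1}| \le |w_i|$, so every $w_i$ is a factor of $w$ and has length at most $n$.

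Given this bound, I would describe the machine informally as one that keeps on its work tape a single current string $W$, never the whole derivation. For $\mathbb{LOGI}^*$ it first guesses and verifies a seed $W \in L_1$ with $|W| \le n$; then it repeatedly guesses a factorization $W = \alpha u v \beta$ with $u,v \ne \varepsilon$, guesses a middle $z$ with $|z| \le n - |W|$, checks $uzv \in L_2$, and replaces $W$ by $\alpha u z v \beta$. For $\mathbb{ROGI}^*$ it seeds with $W \in L_2$ and each step guesses a prefix $p$, a suffix $s$, and a split $W = uzv$ (with $u,v \ne \varepsilon$), checks $puvs \in L_1$, and replaces $W$ by $pWs$. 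In both cases the machine may at any point halt and accept if $W = w$. The guessed auxiliary words, namely $uzv$ for LOGI and $puvs$ for ROGI, occur as factors of the corresponding $w_i$ and so have length at most $n$; since context-free languages are context-sensitive, membership of such a word in the fixed language $L_1$ or $L_2$ is decidable within $O(n)$ space by running a fixed LBA for that language. Hence the whole simulation stays in linear space, and only one current string of length at most $n$ is ever stored.

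Finally I would settle termination and correctness. A step is a genuine no-op exactly when it leaves $W$ unchanged, which for LOGI means $z = \varepsilon$ and for ROGI means $p = s = \varepsilon$; such steps can be deleted from any derivation without affecting its result, so it suffices to simulate derivations in which every step strictly increases $|W|$. As $|W|$ is bounded by $n$ throughout, such a derivation has at most $n$ steps, and an $O(\log n)$-bit counter lets the machine reject once $n$ steps are exceeded, guaranteeing that every computation halts. Soundness holds because each simulated step reproduces exactly one iterated outfix-guided insertion, and completeness because the guesses range over precisely the decompositions permitted by Definition~\ref{maindefin} together with Definition~\ref{datta5}. The one place where I expect the real effort to go is the length bookkeeping that certifies the invariant $|W| \le n$ and, in particular, that the guessed $L_1$- and $L_2$-words never exceed length $n$; once that invariant is nailed down, the rest is a routine nondeterministic linear-space simulation and the proposition follows.
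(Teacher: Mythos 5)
Your proof is correct and follows essentially the same route as the paper's: both arguments rest on the observation that intermediate strings in a one-sided derivation have non-decreasing length (so every intermediate string has length at most $n$), discard the no-op steps so that lengths strictly increase, and then run a nondeterministic linear-space simulation that remembers only the current string, verifies each step's context-free membership tests in linear space, and terminates within at most $n$ stages. One cosmetic slip: for $\mathbb{ROGI}^*$ the checked $L_1$-word $puvs$ is \emph{not} a factor of $w_i = puzvs$ (it is $w_i$ with the factor $z$ deleted), but the bound you actually need, $|puvs| \le |w_i| \le n$, holds anyway, so nothing breaks.
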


\begin{proof}
We consider only the right one-sided insertion closure -- the
proof for left one-sided insertion closure is similar.

Below by a substring occurrence of $w$ we mean a 
unique substring beginning
at a specified position in $w$.
From the definition of right one-sided iterated insertion it
follows that $w \in \mathbb{ROGI}^*(L_1, L_2)$ if and only
if there exists $k \geq 1$ and a sequence of substring 
occurrences of $w$:  $s_1, s_2, \ldots, s_k$ where $s_k = w$ and
$s_i$ is always inside the substring occurrence $s_{i+1}$,
$i = 1, \ldots, k-1$, and:
\begin{itemize}
\item
We can write $s_1 = x_1 u z v x_2$, where $x_1 u v x_2 \in L_1$,
$u z v \in L_2$.
\item We can write $s_2 = x_1' u' z' v' x_2'$, where
$x_1' u' v' x_2' \in L_1$, $u' z' v' = s_1$, 
\item \ldots
\item We can write $s_k = x_1'' u'' z'' v'' x_2''$, where
$x_1'' u'' v'' x_2'' \in L_1$, $u'' z'' v'' = s_{k-1}$.
\end{itemize}
Furthermore, we can assume that $|s_i| < |s_{i+1}|$, $i = 1, \ldots, k-1$,
because if this is not the case, in the chain we can simply omit
$s_{i+1}$. Now, on input $w$, a nondeterministic linear space Turing machine
$M$ can begin by guessing $s_1$ and verifying that it has the required
decomposition. In the $(i+1)$st stage $M$ always ``remembers''
(by markers on the tape) the previous string $s_i$, then guesses 
the substring $s_{i+1}$ (where $s_i$ is a substring of $s_{i+1}$)
and verifies that the conditions hold for $s_{i+1}$. At the end $M$
accepts if $s_k = w$. Since the values $|s_i|$ form a strictly
increasing sequence, the process can be ended after at most $|w|$
stages.
\prend
\end{proof}

In the  proof of Proposition~\ref{patta31} it is sufficient to know
that the languages $L_1$ and $L_2$ are context-sensitive, 
and as a consequence it follows that context-sensitive languages
are closed under one-sided outfix-guided insertion closure.

%\begin{corollary}
%\label{catta32}
%If $L_1$ and $L_2$ are context-sensitive then
%so are
%$\mathbb{ROGI}^*(L_1, L_2)$ and $\mathbb{LOGI}^*(L_1, L_2)$.
%\end{corollary}

We conjecture that, for any context-free language $L$,
 $\mathbb{OGI}^*(L)$ must be context-sensitive.
Constructing a linear bounded automaton for $\mathbb{OGI}^*(L)$
is more difficult than in the case of the right or left one-sided
insertion closures, because a direct simulation of a derivation
of $w \in \mathbb{OGI}^*(L)$ 
(i.e., simulation of the iterated outfix-guided insertion
steps producing $w$)
would need to remember, at a 
given time, an unbounded
number of substrings of the input.

Also we do not know how to make the procedure in the proof of
Proposition~\ref{patta31} deterministic and it remains open
whether the one-sided outfix-guided insertion closures of context-free
languages are always deterministic context-sensitive.

\section{Deciding Closure  under Outfix-Guided Insertion
}
\label{kuusi}

In this section we consider the question whether a
given  language is closed under outfix-guided insertion and show
that  this question is undecidable for context-free languages.

We say that a language $L$ is {\em closed\/} under outfix-guided insertion,
or {\em og-closed\/} for short,
if outfix-guided
inserting strings of $L$ into $L$ does not produce strings outside
of $L$, that is, $(L \leftarrow L) \subseteq L$.

A natural algorithmic problem is then to decide for a given language $L$
whether or not $L$ is og-closed.
If $L$ is regular, by Lemma~\ref{latta20}, we can decide
whether or not $L$ is og-closed.  
For a given DFA $A$, Lemma~\ref{latta20} yields only an
NFA for the language $L(A) \leftarrow L(A)$. In general,
the  NFA equivalence or inclusion problem
is PSPACE complete~\cite{Yu97}. However, inclusion of an NFA
language in the
language $L(A)$ can be tested efficiently when $A$ is deterministic.

\begin{theorem}
\label{patta51}
There is a polynomial time algorithm to decide whether for a
given DFA $A$ the language $L(A)$ is og-closed.
\end{theorem}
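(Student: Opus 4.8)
The plan is to reduce the question of whether $L(A)$ is og-closed to testing inclusion of an NFA language in the DFA language $L(A)$, and to show that each ingredient runs in polynomial time. First I would apply the construction of Lemma~\ref{latta20} to the DFA $A = (\Sigma, Q, \delta, q_0, F)$ taken for \emph{both} arguments, obtaining an NFA $C$ recognizing $L(A) \leftarrow L(A)$. The key observation is that the state set of $C$ is $Q \times (Q \cup \overline{Q} \cup \{\clubsuit, \heartsuit\}) \cup \overline{Q} \times Q$, whose size is $O(|Q|^2)$, and the transitions can be listed in time polynomial in $|Q|$ and $|\Sigma|$. Thus $C$ is constructible in polynomial time. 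By definition, $L(A)$ is og-closed if and only if $(L(A) \leftarrow L(A)) \subseteq L(A)$, i.e. $L(C) \subseteq L(A)$.

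The next step is to decide $L(C) \subseteq L(A)$ efficiently, exploiting that $A$ is deterministic. I would first complete $A$ into a DFA $A'$ recognizing the complement $\Sigma^* \setminus L(A)$ by adding a dead state if necessary and swapping final and non-final states; this does not change the size asymptotically. The inclusion $L(C) \subseteq L(A)$ holds if and only if $L(C) \cap L(A') = \emptyset$. I would construct the product automaton $C \times A'$, an NFA whose states are pairs in (states of $C$) $\times$ (states of $A'$), of size $O(|Q|^2 \cdot |Q|) = O(|Q|^3)$, and whose final states are pairs of mutually final states. Then $L(C) \subseteq L(A)$ holds precisely when no final state of $C \times A'$ is reachable from its initial state, which is a reachability question decidable in time linear in the size of $C \times A'$ by a standard graph search. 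Combining these polynomial-size constructions and the polynomial-time reachability test gives the overall polynomial-time algorithm.

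The step that deserves the most care is justifying that testing $L(C) \subseteq L(A)$ can be done efficiently rather than invoking the general NFA inclusion problem, which is PSPACE-complete as noted in the excerpt. The point is that we never need to determinize $C$: because $A$ (and hence $A'$) is a DFA, its complement is a DFA of essentially the same size, so the product $C \times A'$ remains an NFA of polynomial size, and checking emptiness of an NFA language is a polynomial-time reachability problem. I would emphasize this asymmetry — the inclusion of an \emph{arbitrary} NFA language in a \emph{deterministic} one is tractable — as the crux of the argument, since it is exactly what sidesteps the intractability of general NFA inclusion and yields the claimed polynomial bound.
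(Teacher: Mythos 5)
Your proof is correct and takes essentially the same approach as the paper's: construct the NFA for $L(A) \leftarrow L(A)$ via the construction of Lemma~\ref{latta20} (quadratically many states), complement the DFA $A$ by swapping final and non-final states, and reduce the og-closure question to intersection emptiness of two NFAs, decided by a polynomial-time reachability search on the product automaton. Your explicit remark about first completing $A$ with a dead state before complementation is a minor detail the paper leaves implicit, but otherwise the two arguments coincide.
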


\begin{proof}
As in the proof of Lemma~\ref{latta20} we construct an NFA
$B$ for the language $L(A) \leftarrow L(A)$. The number of
states of $B$ is quadratic in the number of states of $A$.
Let $A'$ be the DFA obtained from $A$ by interchanging the final
states and the non-final states.
Now $L(B) \subseteq L(A)$ if and only if $L(B) \cap L(A') = 
\emptyset$ and intersection emptiness for NFAs can be tested
in polynomial time.
\prend
\end{proof}

The method used in Theorem~\ref{patta51} does not yield
an efficient algorithm if the regular language $L$ is specified
by an NFA. The complexity of deciding og-closure of a language
accepted by an NFA
 remains open.
On the other hand, using a reduction from the Post Correspondence
Problem it follows that the question whether or not
a context-free language
is og-closed in undecidable.

\begin{theorem}
\label{tatta52}
For a  context-free language $L$, specified e.g. by a context-free
grammar, the question whether
or not $L$ is og-closed is undecidable.
\end{theorem}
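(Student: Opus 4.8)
The plan is to reduce the Post Correspondence Problem (PCP) to the complement of the og-closure question. Fix a PCP instance consisting of pairs $(u_1,v_1),\dots,(u_n,v_n)$ with $u_i,v_i\in\Delta^+$, introduce index symbols $C=\{c_1,\dots,c_n\}$ and separators $\$,\heartsuit,\&$ not in $\Delta\cup C$. I would encode the two sides of a potential match as two context-free families of ``half-strings,''
$$X_{\bar\imath}=\$\,c_{i_1}\cdots c_{i_k}\,\heartsuit\&\,v_{i_k}\cdots v_{i_1}\,\$, \qquad Y_{\bar\imath}=\$\,c_{i_1}\cdots c_{i_k}\,\heartsuit\,u_{i_k}\cdots u_{i_1}\,\& ,$$
for every index word $\bar\imath=(i_1,\dots,i_k)$, $k\ge1$. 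Each family is context-free: a linear grammar with productions $T\to c_i\,T\,v_i$ (respectively $T\to c_i\,T\,u_i$) matches the index block $c_{i_1}\cdots c_{i_k}$ against the reversed concatenation $v_{i_k}\cdots v_{i_1}$ (respectively $u_{i_k}\cdots u_{i_1}$) with a single pushdown. The essential point is that the matched outfix carries the index block: outfix-guided inserting $Y_{\bar\jmath}$ into $X_{\bar\imath}$ with matched prefix $u=\$\,c_{i_1}\cdots c_{i_k}\,\heartsuit$ and matched suffix $v=\&$ forces $\bar\imath=\bar\jmath$ (the prefix is literally a common substring) and produces the single \emph{glued string}
$$G_{\bar\imath}=\$\,c_{i_1}\cdots c_{i_k}\,\heartsuit\,(u_{i_k}\cdots u_{i_1})\,\&\,(v_{i_k}\cdots v_{i_1})\,\$ .$$

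Second, I would define a context-free \emph{sink}
$$N=\{\,\$\,R\,\heartsuit\,P\,\&\,Q\,\$ \mid R\in C^*,\; P,Q\in\Delta^*,\; P\neq Q\,\},$$
and set $L=\{X_{\bar\imath}\}\cup\{Y_{\bar\imath}\}\cup N$. Crucially $N$ tests only the \emph{inequality} $P\neq Q$ and ignores whether $P,Q$ are the correct concatenations for $R$; since $\{P\&Q\mid P\neq Q\}$ is context-free (guess a length difference or a mismatching position, counting on the stack), $N$ and hence $L$ are context-free. Consistency between the index block and the two concatenations is instead supplied \emph{for free} by the insertion step, because $X_{\bar\imath}$ and $Y_{\bar\imath}$ are individually well-formed and share the same matched index block. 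Thus every glued string producible by the intended insertion is a genuine $G_{\bar\imath}$, and $G_{\bar\imath}\in N$ exactly when $u_{i_k}\cdots u_{i_1}\neq v_{i_k}\cdots v_{i_1}$.

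Correctness then splits into two directions. For the forward direction, if the instance has a solution then (after reversing the index word) there is an $\bar\imath$ with $u_{i_k}\cdots u_{i_1}=v_{i_k}\cdots v_{i_1}$; for this $\bar\imath$ the glued string $G_{\bar\imath}$ has $P=Q$, so $G_{\bar\imath}\notin N$, and by the separator shapes $G_{\bar\imath}$ is neither an $X$- nor a $Y$-string (it has a nonempty block between $\heartsuit$ and $\&$ and also a nonempty block after $\&$ that ends in $\$$). Hence $G_{\bar\imath}\in(L\leftarrow L)\setminus L$ and $L$ is not og-closed. For the converse, if the instance has no solution then every producible $G_{\bar\imath}$ has $P\neq Q$ and therefore lies in $N\subseteq L$, so the intended insertions are harmless; together with the trivial insertions (those with empty inserted middle, which reproduce the target and stay in $L$) this gives $L\leftarrow L\subseteq L$. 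Since PCP is undecidable and $L$ is obtained effectively from the instance, deciding og-closure of a context-free language is undecidable.

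The main obstacle is the converse direction: one must show that in the no-solution case \emph{every} outfix-guided insertion of two members of $L$ stays in $L$, not merely the intended $Y$-into-$X$ gluing. The controlling invariant is a bookkeeping on the separators $\$,\heartsuit,\&$: any result lying in $L$ must contain exactly one $\heartsuit$ and exactly one $\&$, which forces the inserted middle to contain neither separator, so it is a pure block over $\Delta$ or $C$ drawn from between the separators of the source string, and the matched outfix of the target must straddle its separators. A finite case analysis over the source/target types, using that the only substrings shared across the three families are the separator-delimited blocks, should show that every non-trivial insertion other than the intended gluing has no matching outfix, while the intended gluing lands in $N$. I expect that making this analysis airtight may require a few additional guard symbols, or arranging that $N$ is itself og-closed, but the encoding above captures the essential mechanism.
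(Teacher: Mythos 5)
Your reduction has the same outer shape as the paper's (encode a PCP instance into context-free families whose outfix-guided gluing reconstructs a candidate match), but the mechanism you propose for the direction ``no solution $\Rightarrow$ og-closed'' fails, and the failure is structural, not a technicality your case analysis could repair. The paper arranges matters so that in the no-solution case \emph{no} non-trivial insertion is possible at all (its inserted strings carry a private marker $f$, its targets carry private markers, the target's two halves are generated independently, and the inserted string's shape $w\,\#\,f\,\#\,w^R$ forces any non-trivial insertion to certify a full PCP solution). You instead let the gluing happen and rely on the sink $N$ to absorb its output. The problem is that $N$ also supplies new insertion \emph{targets}, and inserting your $Y$-strings into those targets manufactures equality strings, which $L$ excludes by design. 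Concretely, fix any index word $\bar\jmath=(j_1,\dots,j_{k'})$ and let $W=u_{j_{k'}}\cdots u_{j_1}\neq\varepsilon$. The string
$$
t=\$\,c_{j_1}\cdots c_{j_{k'}}\,\heartsuit\,\&\,W\,\$
$$
lies in $N$ (here $P=\varepsilon\neq W=Q$; recall $N$ imposes no consistency between $R$ and $Q$), and $Y_{\bar\jmath}$ can be outfix-guided inserted into $t$ with matched prefix $\$\,c_{j_1}\cdots c_{j_{k'}}\,\heartsuit$ and matched suffix $\&$, yielding
$$
\$\,c_{j_1}\cdots c_{j_{k'}}\,\heartsuit\,W\,\&\,W\,\$ .
$$
This result has $P'=Q'=W$, so it is in none of $N$, the $X$-family (whose block between $\heartsuit$ and $\&$ is empty), or the $Y$-family (which ends in $\&$); hence it is outside $L$. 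Since such an insertion exists for \emph{every} instance (the $u_i$ are non-empty), your $L$ is never og-closed, so the reduction is constant on the property being decided and decides nothing.

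The root cause is exactly the feature you advertise as keeping $N$ context-free: $N$ tests only the inequality $P\neq Q$ and ignores consistency with $R$. Consistency is supplied by the insertion step only when the target is the intended $X_{\bar\imath}$; once $N\subseteq L$, inconsistent strings of the same separator shape are equally legitimate targets, and gluing $Y_{\bar\jmath}$ into them turns an inequality string into an equality string outside $L$. Your anticipated fixes do not reach this: guard symbols and ``making $N$ og-closed'' address $N$-into-$N$ insertions (which are in fact harmless -- they reproduce the inserted string), not the fatal $Y$-into-$N$ case above; and redesigning $N$ to check consistency of $P$ or $Q$ with $R$ destroys the argument that $N$ is context-free, which was the whole point of the sink. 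Any repair along the paper's lines must instead guarantee that in the no-solution case the inserted strings simply have no admissible landing site outside trivial self-insertions.
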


\begin{proof}
Recall that an
instance of the {\em Post Correspondence Problem\/} (PCP)~\cite{Shallit09}
 consists of two lists
of strings $((u_1, \ldots, u_n), (v_1, \ldots, v_n))$,
$u_i, v_i \in \Sigma^*$, $1 \leq i \leq n$, and a solution of this
instance is a sequence of integers
$(i_1, \ldots, i_k)$, $i_j \in \{ 1, \ldots, n \}$, 
$j = 1, \ldots, k$, such that
$u_{i_1} \cdots u_{i_k} = v_{i_1} \cdots  v_{i_k}$. It is well
known that deciding whether or not a
PCP instance has a solution is undecidable~\cite{Shallit09}.

Let $I_{\rm PCP} = ((u_1, \ldots, u_n), (v_1, \ldots, v_n))$, $u_i, v_i 
\in \{ a, b \}^*$, $1 \leq i \leq n$ be an arbitrary
instance of PCP. Choose
$\Sigma = \{ a, b, f, \cent, \$, \# \}$ and define
\begin{eqnarray*}
L_1 & =  \{ \; & \cent \$ i_1 i_2 \cdots i_r \# u_{i_r}^R u_{i_{r-1}}^R \cdots
u_{i_1}^R \# \# v_{j_1} v_{j_2} \cdots v_{j_s} \# j_s j_{s-1} \cdots
j_1 \$ \cent \; \mid \; \\
& &  r, s \geq 1, 1 \leq i_x, j_y \leq n, 1 \leq x \leq r, 1 \leq y \leq s
\; \}, \; \mbox{ and, }
\end{eqnarray*}
\begin{eqnarray*}
L_2 & = \{ \; & \$ i_1 i_2 \cdots i_r \# w \# f \# w^R \# i_r
i_{r-1} \cdots i_1 \$ \; \mid  \;  \\
& &  w \in \{ a, b \}^*, r \geq 1, 1 \leq i_x \leq n, 1 \leq x \leq r \; \}.
\end{eqnarray*}
The languages $L_1$ and $L_2$ are context-free. (The
language $L_2$ can be generated
by a linear context-free grammar and $L_1$ is the concatenation
of two linear context-free languages.)

We define $L = L_1 \cup L_2$ and claim that 
the instance $I_{\rm PCP}$ has a solution if and only if
$L$ is not og-closed.
Below we prove both implications  of the claim.
\begin{description}
\item[{\rm $\bullet$ ``$I_{\rm PCP}$ has a solution
implies $L$ is not closed'':}] Suppose that
$(i_1, \ldots, i_k)$ is a solution for $I_{\rm PCP}$. 
Now
$$
w_1 = \cent \$ i_1 i_2 \cdots i_k \# u_{i_k}^R u_{i_{k-1}}^R \cdots
u_{i_1}^R \# \# v_{i_1} v_{i_2} \cdots v_{i_k} \# i_k i_{k-1} \cdots
i_1 \$ \cent  \in L_1 \; (\subseteq L).
$$
Also since $(i_1, \ldots, i_k)$ is solution we note that
$$
w_2 = \$ i_1 i_2 \cdots i_k \# u_{i_k}^R u_{i_{k-1}}^R \cdots
u_{i_1}^R \# f \# v_{i_1} v_{i_2} \cdots v_{i_k} \# i_k
i_{k-1} \cdots i_1 \$ \in L_2 \; (\subseteq L).
$$
As illustrated in Fig.~\ref{fatta1},
the string $w_2$ can be (in a unique way) outfix-guided inserted
into the string $w_1$ and the resulting string is not in $L$
because no string of $L$ contains both symbols $\cent$ and $f$.
\begin{figure}[htb]
\includegraphics[width=12.5cm]{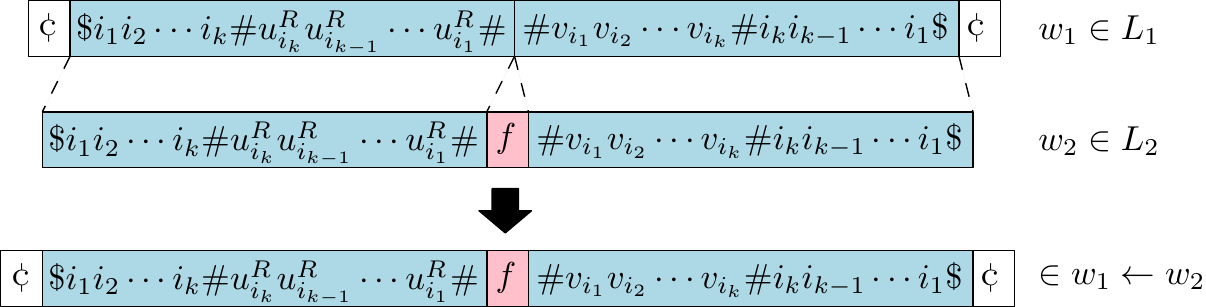}
\centering
\caption{Only possible outfix-guided insertion
of $w_2$ into $w_1$.}
%\caption{The string~$w_2$ can be outfix-guided inserted 
%into the string~$w_1$ and the resulting string is not in $L$.}
\label{fatta1}
\end{figure}

\item[{\rm $\bullet$ ``$I_{\rm PCP}$ has no solution implies
$L$ is  closed'':}]  
Recall that by a trivial outfix-guided
derivation step we mean a derivation step
$w \stackrel{[w]}{\Rightarrow} w$ where $w$ is obtained from
itself by selecting the outfix to consist of a prefix and suffix
of $w$ whose concatenation is equal to $w$.

Using the assumption that the instance $I_{\rm PCP}$ does not
have a solution we show that strings of $L$ can be inserted
into strings of $L$ using only trivial derivation steps which
naturally then implies $(L \leftarrow L) \subseteq L$.

Strings of $L_1$ begin and end with the symbol $\cent$ and this
symbol occurs exactly two times in strings of $L_1$. Thus, strings
of $L_1$ can be outfix-guided inserted into strings of $L_1$
only using a trivial derivation step. For the same reason
(by replacing $\cent$ with \$)
strings of $L_2$ can be inserted into strings of $L_2$ only
using a trivial derivation step.

Strings of $L_1$ cannot be outfix-guided inserted into strings of $L_2$
because the former begin and end with the symbol $\cent$ and the latter
do not contain any occurrences of $\cent$.
The remaining possibility we need to consider is under what
conditions strings of $L_2$ can be outfix-guided inserted into 
strings of $L_1$.

Consider 
$$w_1 = 
\cent \$ i_1 i_2 \cdots i_r \# u_{i_r}^R u_{i_{r-1}}^R \cdots
u_{i_1}^R \# \# v_{j_1} v_{j_2} \cdots v_{j_s} \# j_s j_{s-1} \cdots
j_1 \$ \cent \in L_1, 
$$
$$
\mbox{ and, }
w_2 = \$ i_1 i_2 \cdots i_r \# w \# f \# w^R \# i_r
i_{r-1} \cdots i_1 \$ \in L_2,
$$
and suppose we can write $w_1 = x_1 u v x_2$, $w_2 = uzv$, 
$u, v \neq \varepsilon$. Since $w_1$ does not contain occurrences
of the symbol $f$, in the decomposition of $w_2$ the symbol $f$ must
be in the substring $z$. The string $w_2$ begins and ends with
$\$ $ which are then the first and last symbol of $u$ and $v$, 
respectively (and consequently $x_1 = x_2 = \cent$).
Since the concatenation
of $u$ and $v$ must contain all four occurrences of \# in $w_1$,
it follows that 
$$
u = \$ i_1 i_2 \cdots i_r \# u_{i_r}^R u_{i_{r-1}}^R \cdots
u_{i_1}^R \# \; \mbox{ and } \;
v = \# v_{j_1} v_{j_2} \cdots v_{j_s} \# j_s j_{s-1} \cdots
j_1 \$.
$$
Now in the decomposition of $w_2$ the only possibility is that $z = f$,
and the strings $w_1$ and $w_2$ must be as illustrated
in Fig.~\ref{fatta2}.
From the definition
of the language $L_2$ it follows that $r = s$, $i_x = j_x$, 
$x = 1, \ldots, r$. 
and $( u_{i_r}^R u_{i_{r-1}}^R u_{i_1}^R )^R = 
v_{i_1} v_{i_2} \cdots v_{i_r}$. Together these conditions mean
that $(i_1, \ldots, i_r)$ is a solution for the
instance $I_{\rm PCP}$, which contradicts our assumption that
the instance did not have a solution.
\begin{figure}[htb]
\includegraphics[width=12.5cm]{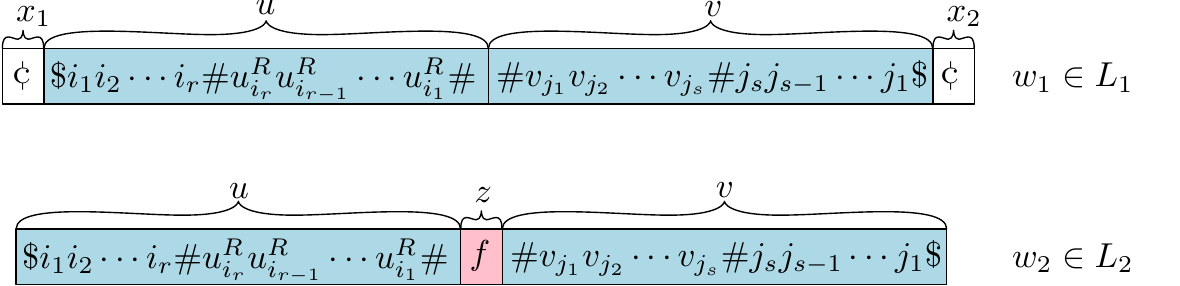}
\centering
\caption{Decompositions of $w_1\in L_1$ and $w_2\in L_2$.}
\label{fatta2}
\end{figure}
\end{description}
\prend
\end{proof}

Note that in the proof of Theorem~\ref{tatta52} the language
$L_1$ is not deterministic context-free. 
It remains open whether og-closure can be
decided for deterministic context-free languages.

\section{Conclusion}

Analogously with the recent overlap assembly 
operation~\cite{Csuhaj2007,EnagantiIKK15},
we have introduced an overlapping insertion operation on strings
and have studied
closure and decision properties of the outfix-guided
insertion operation.  While closure properties of non-iterated
outfix-guided
insertion are straightforward to establish, the questions become
more involved for the outfix-guided insertion closure.
As the main result we have shown that the
outfix-guided insertion closure of a finite language need not
be regular.

Much work remains to be done on outfix-guided insertion. One
of the main
open questions is to determine upper bounds for the complexity
of the outfix-guided insertion closures of regular languages. Does there
exist regular languages $L$ such that the outfix-guided insertion
closure of $L$ is non-context-free?

\section*{Acknowledgments}

We thank the referees for many useful suggestions that have
improved the presentation of the paper.
Cho and Han were supported by the Basic Science Research Program
through NRF funded by MEST (2015R1D1A1A01060097), the Yonsei
University Future-leading Research Initiative of 2016 and the IITP
grant funded by the Korea government (MSIP) (R0124-16-0002).
 Ng and Salomaa were
supported by Natural Sciences and Engineering Research
Council of Canada Grant OGP0147224.

\bibliographystyle{plain}
\bibliography{OutfixGuided}

\end{document}